\renewcommand{\arraystretch}{1.2}
\newdimen\normalarrayskip              
\newdimen\minarrayskip                 
\newif\ifold             \oldtrue            \def\new{\oldfalse}
\def\arraymode{\ifold\relax\else\displaystyle\fi} 
\def\eqnumphantom{\phantom{(\theequation)}}     
\def\@arrayskip{\ifold\baselineskip\z@\lineskip\z@
     \else
     \baselineskip\minarrayskip\lineskip2\minarrayskip\fi}
\def\@arrayclassz{\ifcase \@lastchclass \@acolampacol \or
\@ampacol \or \or \or \@addamp \or
   \@acolampacol \or \@firstampfalse \@acol \fi
\edef\@preamble{\@preamble
  \ifcase \@chnum
     \hfil$\relax\arraymode\@sharp$\hfil
     \or $\relax\arraymode\@sharp$\hfil
     \or \hfil$\relax\arraymode\@sharp$\fi}}
\def\@array[#1]#2{\setbox\@arstrutbox=\hbox{\vrule
     height\arraystretch \ht\strutbox
     depth\arraystretch \dp\strutbox
     width\z@}\@mkpream{#2}\edef\@preamble{\halign
\noexpand\@halignto
\bgroup \tabskip\z@ \@arstrut \@preamble \tabskip\z@ \cr}%
\let\@startpbox\@@startpbox \let\@endpbox\@@endpbox
  \if #1t\vtop \else \if#1b\vbox \else \vcenter \fi\fi
  \bgroup \let\par\relax
  \let\@sharp##\let\protect\relax
  \@arrayskip\@preamble}
\def\eqnarray{\stepcounter{equation}%
              \let\@currentlabel=\theequation
              \global\@eqnswtrue
              \global\@eqcnt\z@
              \tabskip\@centering
              \let\\=\@eqncr
 \halign to \displaywidth\bgroup
    \eqnumphantom\@eqnsel\hskip\@centering
    $\displaystyle \tabskip\z@ {##}$%
    \global\@eqcnt\@ne \hskip 2\arraycolsep
         $\displaystyle\arraymode{##}$\hfil
    \global\@eqcnt\tw@ \hskip 2\arraycolsep
         $\displaystyle\tabskip\z@{##}$\hfil
         \tabskip\@centering
    &{##}\tabskip\z@\cr}
\def\input#1 {\endgroup}\fi
\newcounter{app}
\def\app{\setcounter{equation}{0}
\def\theequation{A\Roman{app}.\arabic{equation}}\par
   \addvspace{4ex}
   \@afterindentfalse
  \secdef\@app\@dapp}
\newcommand\@app{\@startsection {app}{1}{0ex}%
                                   {-3.5ex \@plus -1ex \@minus -.2ex}%
                                   {2.3ex \@plus.2ex}%
                                   {\normalfont\Large\bf}}
\def\@dapp#1{%
{\parindent \z@ \raggedright  \bf #1}\par\nobreak}
\def\l@app#1#2{\ifnum \c@tocdepth >\z@
    \addpenalty\@secpenalty
    \addvspace{1.0em \@plus\p@}%
    \setlength\@tempdima{8.5em}%
    \begingroup
      \parindent \z@ \rightskip \@pnumwidth
      \parfillskip -\@pnumwidth
      \leavevmode \bfseries
      \advance\leftskip\@tempdima
      \hskip -\leftskip
      #1\nobreak\hfil \nobreak\hb@xt@\@pnumwidth{\hss #2}\par
    \endgroup\fi}
\newcounter{sapp}[app]
\def\sapp{\def\theequation{A\arabic{app}.\arabic{equation}}\par
   \@afterindentfalse
  \secdef\@sapp\@dsapp}
\newcommand\@sapp{\@startsection{sapp}{2}{\z@}%
                                     {-3.25ex\@plus -1ex \@minus -.2ex}%
                                     {1.5ex \@plus .2ex}%
                                     {\normalfont\large\bfseries}}
\def\@dsapp#1{%
{\parindent \z@ \raggedright  \bf #1}\par\nobreak}
\newcommand{\l@sapp}{\@dottedtocline{2}{1.5em}{3em}}
\def\draft{\oddsidemargin -.5truein
        \def\@oddfoot{\sl preliminary draft \hfil
        \rm\thepage\hfil\sl\today\quad\militarytime}
        \let\@evenfoot\@oddfoot \overfullrule 3pt
        \let\label=\draftlabel
        \let\marginnote=\draftmarginnote
   \def\@eqnnum{(\theequation)\rlap{\kern\marginparsep\tt\@eqnlabel}%
\global\let\@eqnlabel\@vacuum}  }
\def\be{\begin{eqnarray}}
\def\ee{\end{eqnarray}}
\def\p{\partial}
\def\beq{\begin{equation}}
\def\eeq{\end{equation}}
\def\ba{\beq\new\begin{array}{c}}
\def\ea{\end{array}\eeq}
\def\be{\ba}
\def\ee{\ea}
\def\Tr{{\rm Tr}\,}
\def\dim{{\rm dim}\,}
\def\res{{\rm res}\,}
\def\vir{{\rm vir}\,}
\def\ev{{\rm ev}\,}
\def\diag{{\rm diag}\,}
\def\sppan{{\rm span}\,}
\def\ev{{\rm ev}}
\def\psistar{\psi^{*}}
\newfont{\Bbbb}{msbm7 scaled 1\@ptsize00}
\newcommand{\ZZ}{\mathbb{Z}}
\newcommand{\z}{\raise-1pt\hbox{$\mbox{\Bbbb Z}$}}
\newcommand{\cc}{\raise-1pt\hbox{$\mbox{\Bbbb C}$}}
\newcommand{\rr}{\raise-1pt\hbox{$\mbox{\Bbbb R}$}}
\DeclareMathOperator{\GL}{GL}
\DeclareMathOperator{\gl}{gl}
\def\rbr{\right >}
\def\lvac{\left <0\right |}
\def\rvac{\left |0\right >}
\def\lvacn{\left <n\right |}
\def\rvacn{\left |n\right >}
\def\normord{ {\scriptstyle {{\bullet}\atop{\bullet}}} }
\newfont{\alef}{msbm10 at 11pt}
\newfont {\goth}{eufm10 at 11pt}
\def\mathbb#1{\hbox{{\alef #1}}}
\let\@@savethanks\thanks
\def\thanks#1{\gdef\thefootnote{\alph{footnote}}\@@savethanks{#1}}
\newtheorem{theorem}{Theorem}
\newtheorem{lemma}{Lemma}[section]
\newtheorem{proposition}[lemma]{Proposition}
\newtheorem{corollary}[lemma]{Corollary}
\newtheorem{remark}{Remark}[section]
\newtheorem*{theorem*}{Theorem}
\numberwithin{equation}{section}
\g@addto@macro \normalsize {%
 \setlength\abovedisplayskip{14pt plus 3pt minus 3pt}%
 \setlength\belowdisplayskip{14pt plus 3pt minus 3pt}%
  \setlength\abovedisplayshortskip{11pt plus 3pt minus 3pt}%
 \setlength\belowdisplayshortskip{11pt plus 3pt minus 3pt}%
}
\title{
\bigskip
{\bf Matrix model for the stationary sector of Gromov-Witten theory of ${\bf P}^1$} \vspace{.5cm}}
\author{{\bf Alexander Alexandrov}\thanks{E-mail:  {\tt alexandrovsash at gmail.com}}
\date{ } \\
{\small {\it Center for Geometry and Physics, Institute for Basic Science (IBS), Pohang 37673, Korea}}\\
}
\begin{document}

\setcounter{footnote}{0}

\setcounter{tocdepth}{3}

\maketitle

\vspace{-8.0cm}

\begin{center}
\end{center}

\vspace{6.5cm}
\begin{abstract} 
In this paper we investigate the tau-functions for the stationary sector of Gromov-Witten theory of the complex projective line and its version, relative to one point. In particular, we construct the integral representation for the points of the Sato Grassmannians, Kac-Schwarz operators, and quantum spectral curves. This allows us to derive the matrix models. 
\end{abstract}
\bigskip


\bigskip

\newpage

\tableofcontents

\def\thefootnote{\arabic{footnote}}
\section{Introduction}
\setcounter{equation}{0}

Matrix models play an important role in modern enumerative geometry. They are closely related to its other ingredients, including integrable systems, quantum spectral curves  and Chekhov-Eynard-Orantin topological recursion.
The main goal of this paper is to construct matrix models for the generating functions of the stationary sector of Gromov-Witten theory of ${\bf P}^1$ and its relative version. Our construction is based on the known relation between these generating functions of the Gromov-Witten invariants, tau-functions of integrable hierarchies, and free fermions obtained by Okounkov and Pandharipande \cite{OP,OP2}.

Let $\overline{\mathcal M}_{g,n} ({\bf P}^1,d)$ be the moduli space of  $n$-pointed genus $g$ stable maps to ${\bf P}^1$ of degree $d$, $f:(\Sigma, p_1,\dots,p_n) \to {\bf P}^1$. The virtual dimension of  $\overline{\mathcal M}_{g,n} ({\bf P}^1,d)$ is $2g-2+n+2d$. 
Let $\mathcal{L}_i$ be the line bundle on $\overline{\mathcal M}_{g,n} ({\bf P}^1,d)$, whose fiber 
 is the cotangent line  at the $i$th marked point, and $\psi_i$ is the first Chern class of $\mathcal{L}_i$. Consider $\omega\in H^2({\mathbf P}^1,{\mathbb Q})$ be the Poincar{\'e} dual of the point class. Together with $1\in H^0({\mathbf P}^1,{\mathbb Q})$ they constitute the standard basis of $ H^*({\mathbf P}^1,{\mathbb Q})$. Let also $\gamma_i \in \{1, \omega\}$. The connected Gromov-Witten invariants of ${\bf P}^1$ are the integrals
 \be
\left< \prod_{i=1}^n \tau_{k_i}(\gamma_i) \right>_{g,d}:=\int_{[\overline{\mathcal M}_{g,n} ({\bf P}^1,d)]^\vir } \prod_{i=1}^n \psi_i^{k_i} \ev_i^*(\gamma_i),
\ee
where $\ev_i$ is the evaluation map,
defined by evaluating a stable map $f:(\Sigma, p_1,\dots,p_n) \to {\bf P}^1$
at the $i$th marked point, $\ev_i^*(\gamma_i)\in H^*(\overline{\mathcal M}_{g,n},{\mathbb Q})$. These invariants vanish unless $\sum_{i=1}^n\left(\dim (\gamma_i)+k_i\right)=2g-2+n+2d$. Consider the generating function of the Gromov-Witten invariants of ${\bf P}^1$
\be\label{neqpf}
{Z}^*({\bf t^\omega},{\bf t}^1)=\exp\left(\sum_{g=0}^\infty\sum_{d=0}^\infty \hbar^{2g-2} q^d\left<\exp
\left(\sum_{k=0}^\infty t^\omega_k \tau_k(\omega)+t^1_k\tau_k(1) \right)\right>_{g,d}\right).
\ee
Here powers of the parameters $q$ and $\hbar$ keep track of the map's degree and the genus of the curve respectively. 

Relation of this generating function to a proper version of the Toda integrable hierarchy, known as Toda conjecture, was formulated in \cite{Eguchi1,Eguchi2r}. This version of the integrable hierarchy, now known as the {\em extended Toda hierarchy}, was defined and investigated in \cite{Getz0,Zhang,CDZ}. Toda conjecture was discussed in \cite{Pand,OkunToda,LMN}, and proved in \cite{DZ,OP2}.

The {\em stationary sector} of the Gromov-Witten theory is formed by intersection numbers of $\tau_{k}(\omega)$'s, the descendants of $\omega$.
In this paper we consider the {\em extended stationary Gromov-Witten generating function}, which includes the dependence on $t_0^1$:
\be\label{tau11}
\tau({\bf t^\omega},t^1_0):={Z}^*({\bf t^\omega},{\bf t}^1)|_{t^1_{\geq1}=0}.
\ee
This is a tau-function of the MKP hierarchy.

Gromov-Witten theory on ${\bf P}^1$ has a natural equivariant deformation. This deformation can be described by the free fermions in a particularly nice way, derived by Okounkov and Pandharipande \cite{OP}.
 This free fermion description allows us to describe explicitly a $\GL(\infty)$ group element for the nonequivariant MKP tau-function (\ref{tau11}), and to construct the integral expressions for the basis vectors of the Sato Grassmannian point. Then, with the standard methods of matrix models, we prove 
\begin{theorem}\label{T_1}
The extended stationary Gromov-Witten generating function of ${\bf P}^1$ is given by the asymptotic expansion of the  matrix integral
\be\label{Int1}
\tau({\bf t}^\omega,t_0^1)=\frac{e^{\frac{1}{\hbar}\Tr\left((t_0^1-\Lambda)\log\Lambda+\Lambda\right)+\frac{q}{\hbar^2}}}{\hbar^\frac{N^2}{2}}\int_{{\mathcal H}_N} \left[d \mu ({Y})\right] e^{\frac{1}{\hbar}\Tr\left(Y\Lambda -e^Y+ q e^{-Y}+\left(N\hbar/2-t^1_0\right)Y\right)},
\ee
where
\be
t_k^\omega = \hbar\, k!\, Tr \Lambda^{-k-1}.
\ee
\end{theorem}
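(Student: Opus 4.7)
My proof would follow the four-step road map sketched in the introduction: fermionic realization, Sato Grassmannian with integral basis, Miwa--Sato determinant, Harish-Chandra--Itzykson--Zuber reduction. Starting from the Okounkov--Pandharipande free fermion formula \cite{OP,OP2} for the full generating function $Z^*(\mathbf{t}^\omega,\mathbf{t}^1)$, I would specialize $t^1_{\ge 1}=0$ and repackage the result into an explicit element $\hat g(q,t_0^1)\in \GL(\infty)$ acting on the half-infinite wedge, so that
\[
\tau(\mathbf{t}^\omega, t_0^1)=\langle 0|\,\hat\Gamma_+(\mathbf{t}^\omega)\,\hat g(q,t_0^1)\,|0\rangle .
\]
Under the boson--fermion correspondence the $q$-dependence enters through the energy operator (which will generate the $qe^{-Y}$ term in the matrix integral), while the $t_0^1$-dependence enters through the charge operator (producing the $(N\hbar/2-t_0^1)Y$ shift).

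Next, I would describe the corresponding Sato Grassmannian point $W=\hat g\cdot H_-$ by exhibiting a distinguished basis $\{\phi_n(z)\}_{n\ge 1}$ of the form
\[
\phi_n(z)=\int_\gamma \frac{dw}{2\pi i}\; w^{n-1}\,\exp\!\left(\tfrac{1}{\hbar}\bigl(zw-e^{w}+qe^{-w}+(N\hbar/2-t_0^1)\,w\bigr)\right),
\]
with $\gamma$ an appropriate steepest-descent contour. The exponent is precisely the classical Okounkov--Pandharipande spectral curve for $\mathbf{P}^1$, and its quantization in the variable $z$ provides the Kac--Schwarz operators that generate the full basis from $\phi_1$; one then checks that the resulting subspace coincides with $W$ from the previous step.

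With the Miwa substitution $t_k^\omega = \hbar\, k!\, \Tr \Lambda^{-k-1}$, Sato's formula collapses the tau-function to a ratio of determinants $\det[\phi_j(z_i)]_{i,j=1}^N/\Delta(z)$, where $\Delta$ is the Vandermonde of the eigenvalues of $\Lambda$. Substituting the integral representation and applying the Andr\'eief identity produces an $N$-fold integral of the form $\int \prod_i dw_i\,\Delta(w)\,\det[e^{z_i w_j/\hbar}]\,\prod_i e^{S(w_i)/\hbar}/\Delta(z)$; reversing Harish-Chandra--Itzykson--Zuber converts this to the Hermitian matrix integral \eqref{Int1} over $Y$ with eigenvalues $w_i$. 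The prefactor $e^{\frac{1}{\hbar}\Tr((t_0^1-\Lambda)\log\Lambda+\Lambda)+q/\hbar^2}\hbar^{-N^2/2}$ is then pinned down by the HCIZ normalization together with matching the $\mathbf{t}^\omega\to 0$ limit of $\tau$.

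The main obstacle sits in the second step. Step~1 produces $\hat g$ as an operator built from vertex operators and normal-ordered fermion bilinears; reorganizing this into a Sato Grassmannian point whose basis is a \emph{single} contour integral with the exact exponent $zw-e^{w}+qe^{-w}+\cdots$ required by \eqref{Int1} is the technical heart of the argument. The delicate features are the coexistence of both $e^w$ and $qe^{-w}$ in the potential (surviving the nonequivariant limit as relics of the two torus-fixed points on $\mathbf{P}^1$) and the linear shift by $N\hbar/2-t_0^1$, whose sign and coefficient must be tracked through the charge shift of the boson--fermion correspondence.
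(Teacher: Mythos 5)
Your road map coincides step for step with the paper's actual proof: free-fermion realization from Okounkov--Pandharipande, an integral basis for the Sato Grassmannian point, the Miwa/determinant formula, and Harish-Chandra--Itzykson--Zuber to pass from the eigenvalue integral to the Hermitian one with the non-flat measure $[d\mu(Y)]$. However, the step you yourself flag as ``the technical heart'' --- producing the single-integral basis vectors from the fermionic group element --- is left entirely open, and it is a genuine gap: nothing in your outline explains \emph{why} the dressing by the Getzler/Okounkov--Pandharipande conjugation operator turns monomials into integrals with the exponent $zw-e^{w}+qe^{-w}$. The paper closes this gap by an explicit construction: at $\epsilon=0$ the conjugation operator acts on the Grassmannian as ${\mathtt W}=e^{\mathtt y}e^{-\frac{1}{2}\partial_z}e^{{\mathtt f}(z)}$ with $e^{\mathtt y}ze^{-\mathtt y}=z-z\partial_z$, from which one derives $e^{-\mathtt y}\cdot z^{k}=\Gamma(z+k)/\Gamma(z)$ and hence that ${\mathtt W}^{-1}$ is precisely the Gamma-function (Mellin--Laplace) integral transform $g(z)\mapsto \frac{z^{-z}e^{z}}{\sqrt{2\pi}}\int_{\mathbb R}e^{y(z+1/2)-e^{y}}g(e^{y})\,dy$; the residual multiplication series ${\mathtt f}(z)$ is pinned down by matching the $q=0$ tau-function against Stirling's asymptotic expansion of $\Gamma(z+1/2-n)$. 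Only after this lemma does the factor $e^{-e^{y}}$ (and, acting on $e^{\tilde q/z}z^{k-n-1}$, the factor $e^{\tilde q e^{-y}}$) appear in the integrand; without it your step~3 has nothing to feed into Andr\'eief/HCIZ.

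Two smaller corrections. First, the shift $(N\hbar/2-t_0^1)w$ should not sit inside the basis vectors: a Grassmannian point cannot depend on the number $N$ of Miwa variables. In the paper the basis vectors carry only $e^{y(z+k-n-1/2)}$ with $n=t_0^1/\hbar$; the $N\hbar/2$ term emerges afterwards from the bookkeeping of $\Delta(e^{y})$, $\Delta(y)$ and the factor $\prod_i e^{-\frac{N-1}{2}y_i}$ when the eigenvalue integral is repackaged as a Hermitian integral with the measure $[d\mu(Y)]$. Second, the overall prefactor is not ``pinned down by normalization'' a posteriori: $e^{\frac1\hbar \Tr((t_0^1-\Lambda)\log\Lambda+\Lambda)}$ is exactly the Stirling prefactor $z^{n-z}e^{z}/\sqrt{2\pi}$ of the basis vectors, $e^{q/\hbar^2}$ comes from the normalization $\tau_n(\mathbf 0)=1$ of the fermionic expectation value, and $\hbar^{-N^2/2}$ from the final shift $Y\to Y-I\log\hbar$.
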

We also prove that a simple deformation of this matrix model describes a stationary sector of the Gromov-Witten theory of ${\mathbf P}^1$ relative to one point (Theorem \ref{T2}).

Matrix integral (\ref{Int1}) belongs to a family of the generalized Kontsevich models (GKM) \cite{Konts,KMMM,KMMM1,Mortoda,Adler,IZ}. This family is believed to capture the fundamental properties of two-dimensional topological gravity and it is naturally related to the modified KP hierarchy. The model (\ref{Int1}) for the extended stationary Gromov-Witten theory of ${\bf P}^1$ is similar to the original Kontsevich matrix integral \cite{Konts} for the intersection theory on the moduli spaces of the Riemann surfaces. 
The representatives of GKM describe many enumerative geometry tau-functions. In particular, the model (\ref{Int1}) is similar to a model derived earlier for Hurwitz numbers \cite{MSh,MMRP}. It is not very surprising, because the stationary Gromov-Witten invariants are closely related to Hurwitz numbers \cite{OkunToda, Pand,OP2}.

We expect that our methods should help to investigate other models 
of enumerative geometry related to ${\bf P}^1$, and to construct corresponding matrix models. In particular, we expect that our approach should work for the full generating functions of the Gromov-Witten theory, both in equivariant and nonequivariant setups, as well as for the orbifold Gromov-Witten theory of ${\bf P}^1$ (for a relevant recent progress and the quantum spectral curve in this case see \cite{Chen}).  Our findings, in particular the quantum spectral curves, should also be related to the topological recursion.  These topics will be discussed elsewhere.

The present paper is organized as follows. In Section \ref{S1} we remind the reader the free fermion description of the KP/Toda integrable hierarchies. Section \ref{seceq} is devoted to the description of the extended stationary Gromov-Witten invariants of ${\bf P}^1$. In Section \ref{S4} we generalize the results to the case of the Gromov-Witten invariants of ${\bf P}^1$ relative to one point. 

\section{Tau-functions and free fermions}\label{S1}

In this section we give a brief description of the modified 
Kadomtsev-Petviashvili (MKP) hierarchy in terms of free fermions (or, equivalently, infinite wedge space). For more details see \cite{JMbook,AZ,A1,Sa,Segal} and references therein. 
\begin{remark}
Let us stress that our conventions do not completely coincide with those of \cite{OP}. For example, we use the fermions with integer indices (compared to semi-integer indices in \cite{OP}), and denote the components of the bosonic current by $J_k$ (compared to $\alpha_k$ in \cite{OP}).
\end{remark}

Let us introduce the free fermions  $\psi_n , \psistar_{n}$, $n\in \mathbb{Z}$, which satisfy the canonical anticommutation relations
\be\label{anti}
[\psi_n , \psi_m ]_+ = [\psistar_n, \psistar_m]_+=0, \quad
[\psi_n , \psistar_m]_+=\delta_{mn}.
\ee
They generate
an infinite dimensional Clifford algebra. We use their
generating series
\be\label{ferm0}
\psi (z)=\sum_{k\in \z}\psi_k z^k, \quad \quad
\psistar (z)=\sum_{k\in \z}\psistar_k z^{-k}.
\ee

Next, we introduce a vacuum state $\left |0\rbr$, which is
a ``Dirac sea'' where all negative mode states are empty
and all positive ones are occupied:
\be
\psi_n \rvac =0, \quad n< 0; \quad \quad \quad
\psistar_n \rvac =0, \quad n\geq 0.
\ee
(For brevity, we call indices $n\geq 0$ {\it positive}.)
Similarly, the dual vacuum state has the properties
\be
\lvac \psistar_n  =0, \quad n< 0; \quad \quad \quad
\lvac \psi_n  =0, \quad n\geq 0.
\ee
With respect to the vacuum $\rvac$, the operators $\psi_n$ with
$n<0$ and $\psistar_n$ with $n\geq 0$ are annihilation operators
while, the operators $\psistar_n$ with $n<0$ and
$\psi_n$ with $n\geq 0$ are creation operators. The normal ordering $\normord (\ldots )\normord $ with respect
to the Dirac vacuum $\rvac$ is defined as 
follows: all annihilation operators
are moved to the right and all creation operators are moved to
the left, taking into account that the factor $(-1)$ appears 
each time two neighboring 
fermionic operators exchange their positions. 

We also introduce ``shifted'' Dirac vacua $\rvacn$ and $\lvacn$
defined as   
\begin{align}\label{vacdefr}
\rvacn = \left \{
\begin{array}{l}
\psi_{n-1}\ldots \psi_1 \psi_0 \rvac , \,\,\,\,\, n> 0,
\\ \\
\psistar_n \ldots \psistar_{-2}\psistar_{-1}\rvac , \,\,\,\,\, n<0,
\end{array} \right.
\end{align}
\begin{align}\label{vacdefl}
\lvacn = \left \{
\begin{array}{l}
\lvac \psistar_{0}\psistar_{1}\ldots \psistar_{n-1} , \,\,\,\,\, n> 0,
\\ \\
\lvac \psi_{-1}\psi_{-2}\ldots \psi_{n} , \,\,\,\,\, n<0.
\end{array} \right.
\end{align}
For them we have 
\begin{equation}
\begin{split}
 \psi_m \rvacn &=0, \quad m < n; 
\qquad 
\psistar_m \rvacn =0, \quad m \ge n, \\
\lvacn  \psi_{m}&=0 , \quad m \ge n; 
\qquad 
\lvacn  \psistar_{m}=0 , \quad m < n.
\end{split}
\end{equation}

Normally ordered bilinear combinations $X_B=\sum_{mn} B_{mn}\normord\psistar_m \psi_n\normord$
of the fermions, with certain conditions
on the matrix $B = (B_{mn})$, generate an
infinite-dimensional Lie algebra $\gl(\infty)$. Exponentiating these expressions, one obtains
an infinite dimensional group (a version
of $\GL(\infty )$) with the group elements 
\begin{equation}\label{gl}
G=\exp \Bigl (\sum_{i, k \in {\z }}B_{ik}\normord\psistar_i \psi_k\normord\Bigr ).
\end{equation}
For any group element $G$ a tau-function of the MKP hierarchy is given by a vacuum expectation value
\begin{equation}\label{tau}
\tau_n ({\bf t})=\lvacn e^{J_+ ({\bf t})}G\rvacn.
\end{equation}
It depends on
the variables ${\bf t}=\{t_1, t_2, \ldots \}$, usually called times, through the
 linear combination $J_+({\bf t})=\sum_{k>0}t_k J_k$ of the operators
\beq\label{Jk}
J_k =\sum_{j\in \z}\normord \psi_j \psistar_{j+k}\normord
=\mbox{res}_z \Bigl ( z^{-1} \normord \psi (z)
z^{k}\psistar (z)\normord \Bigr ).
\eeq
Here $\res_{z=0} \, z^k:=\delta_{k,-1}$. These operators are the Fourier modes of the {\em current operator}
$J(z)= z^{-1}\normord \psi (z)\psistar (z)\normord $ and span the Heisenberg algebra
\beq\label{Heis}
\left[J_k, J_l\right]= k \delta_{k+l,0}.
\eeq
Operators $J_k$ with positive and negative $k$ act on the vacuum as
\beq
J_k\rvac=\lvac J_{-k}=0,\quad \quad \quad k\geq0.
\eeq
For a fixed $n\in {\mathbb Z}$ tau-function (\ref{tau}) is a solution of KP hierarchy.  We assume that $\tau_n({\bf 0})=1$.  

Let $\mathbb{Z}_+$ be the set of all nonnegative integers. 
The MKP hierarchy relates $\tau_m$ to $\tau_{n}$ for any $m-n \in \mathbb{Z}_+$.
It can be described by the bilinear Hirota identity 
\begin{equation}\label{bi1}
\oint_{{\infty}} z^{m-n} e^{\sum_{k=1}^\infty (t_k-t_k')z^k}
\,\tau_{m} ({\bf t}-[z^{-1}])\,\tau_{n} ({\bf t'}+[z^{-1}])dz =0,
\end{equation}
where we use the standard short-hand notations
\be
{\bf t}\pm [z^{-1}]:= \bigl \{ t_1\pm   
z^{-1}, t_2\pm \frac{1}{2}z^{-2}, 
t_3 \pm \frac{1}{3}z^{-3}, \ldots \bigr \}.
\ee
More generally, for any group element $G$, the vacuum expectation value
\beq\label{stancor}
\tau_n({\bf t},{\bf s})=\lvacn e^{J_{+}({\bf t})} G e^{J_{-}({\bf s})}\rvacn ,
\eeq
where $J_-({\bf s})=\sum_{k>0} s_k J_{-k}$,
is a tau-function of the 2D Toda lattice hierarchy \cite{Ueno}.

Let $\Lambda:=\diag(\lambda_1,\dots,\lambda_N)$ be a diagonal matrix. In the Miwa parametrization 
\be
\left.f([\Lambda^{-1}]):=f({\bf t})\right|_{t_k=\frac{1}{k}\Tr \Lambda^{-k}}
\ee
the tau-function of the KP hierarchy can be represented as a ratio of two determinants
\be\label{taudet}
\tau_n([\Lambda^{-1}]) = \frac{\det_{k,l=1}^N \Phi_k^{(n)} (\lambda_l)}{\Delta(\lambda)},
\ee
where
\be
\Delta(\lambda)=\prod_{i<j}(\lambda_j-\lambda_i)
\ee
is the Vandermonde determinant. In (\ref{taudet}) the set of the basis vectors $\Phi_k^{(n)}$
\be
{\mathcal W}_n= \sppan_{\cc} \{\Phi_1^{(n)},\Phi_2^{(n)},\Phi_3^{(n)},\dots\} \in \rm{Gr}^{(0)}_+
\ee
defines a point of the Sato Grassmannian labeled by $n \in {\mathbb Z}$. We assume that the basis vectors are normalised:
\be\label{bvnorm}
\Phi_k^{(n)}(z)=z^{k-1}\left(1+O(z^{-1})\right).
\ee
\begin{remark}
The version of the Sato Grassmannian we are working with is a ``dual" one in the standard notations.  For simplicity, we omit the world "dual" below.
\end{remark}

Let us consider an algebra $w_{1+\infty}$ of the differential operators on the circle 
\be
w_{1+\infty}:= \sppan_{\cc} \left\{z^k D^m \middle| k\in \mathbb{Z}, m \in \mathbb{Z}_+\right\},
\ee
where 
\be
D:=z \p_z,
\ee
and $\p_z:=\frac{\p}{\p z}$. We also introduce
\be
w^\pm :=\sppan_{\cc} \left\{z^{\pm k} D^m \middle| k=1,2,3,\dots,m \in \mathbb{Z}_+\right\}.
\ee
and 
\be
w^{0}:= \sppan_{\cc}\left\{ D^m \middle| m \in  \mathbb{Z}_+  \right\}
\ee
so that
\be
w_{1+\infty}=w^-\oplus w^0 \oplus w^+.
\ee

For an operator $a\in w$ let
\be\label{btof}
W_a:=\res_z \left(z^{-1} \normord\psi(z) a \psi^*(z)  \normord \right),
\ee
be a corresponding fermionic $\gl(\infty)$ operator. For example, $J_k=W_{z^k}$.  Then
\begin{equation}
\begin{split}\label{Vacom}
\left[W_a,\psi(z)\right]&=a^* \cdot \psi(z),\\
\left[W_a,\psistar(z)\right]&=- a  \cdot \psistar(z),
\end{split}
\end{equation}
where for any monomial $P=z^k D^l$  the adjoint operator is $P^*=\left(-D\right)^l z^k$. By $\cdot$ we denote the action of the operator on the function to distinguish it from the product of operators.
This relation can be naturally lifted to the relation between the groups $e^{w_{1+\infty}}$ and
$\GL(\infty )$. Let us denote by ${\mathcal W}^G_n$ the point of the Sato Grassmannian, associated with the tau-function (\ref{tau}).  Then, if $a\in w^-$, from the construction in \cite{A1} it follows that 
\begin{lemma} \label{lemmaonaact}
For any $G$ the points of the Sato Grassmannians $ {\mathcal W}^G_n$ and ${\mathcal W}^{e^{W_a}G}_n$ for the group elements $G$ and $e^{W_a}G$, respectively,  are related by
\be
 {\mathcal W}^{e^{W_a}G}_n  =z^{n}\, e^{z^{-1} a z}\, z^{-n} \, {\mathcal W}^G_n.
\ee
\end{lemma}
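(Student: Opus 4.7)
My plan is to derive the transformation of the Sato Grassmannian basis $\{\Phi_k^{(n)}\}$ of $\mathcal{W}_n^G$ under $G\mapsto e^{W_a}G$ by pushing the operator $e^{W_a}$ through the fermionic/Baker--Akhiezer representation of these basis vectors. The two inputs I rely on are: (i) each basis vector $\Phi_k^{(n)}(z)$ is obtained as a normalized fermionic matrix element of the schematic form $\langle\cdot|\psi(z)\cdots G|n\rangle/\langle n|G|n\rangle$, so that its dependence on $G$ is manifest; and (ii) the left vacuum is stable under $e^{W_a}$, i.e., $\langle n|e^{W_a}=\langle n|$. Fact (ii) follows from the explicit mode expansion $W_a=\sum_{j\in\mathbb{Z}} P(-j)\normord\psi_{j+k}\psistar_j\normord$ valid for $a=z^{-k}P(D)$ with $k\geq 1$ (which is the content of $a\in w^{-}$): each term $\langle n|\normord\psi_{j+k}\psistar_j\normord$ vanishes because the conditions $j+k<n$ (for $\langle n|\psi_{j+k}\neq 0$) and $j\geq n$ (for $\psistar_j$ not to annihilate on the left) are incompatible whenever $k\geq 1$.

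With these ingredients in hand, one applies (\ref{Vacom}) to move $e^{W_a}$ past each $\psi(z)$,
$$
e^{W_a}\,\psi(z)\,e^{-W_a}=\bigl(\exp a^{*}\bigr)\cdot\psi(z),
$$
with $\exp a^{*}$ acting as a differential operator in $z$. Combined with (ii) and the unchanged normalization $\langle n|e^{W_a}G|n\rangle=\langle n|G|n\rangle$, this shows that the new basis vectors of $\mathcal{W}_n^{e^{W_a}G}$ are obtained from the old ones by the action of an explicit differential operator in $z$. This operator is then identified with $z^n\,e^{z^{-1}az}\,z^{-n}$ via the standard fermion--Grassmannian dictionary developed in \cite{A1}: the outer conjugation by $z^n$ arises because (\ref{bvnorm}) normalizes the leading power of $\Phi_k^{(n)}$ at $z^{k-1}$ independently of the charge $n$, so the charge-dependent shift can only enter through this overall factor, while the substitution $a^{*}\mapsto z^{-1}az$ inside the exponential reflects the adjoint-to-conjugation conversion inherent in the duality between $\psi$ and $\psistar$.

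The main obstacle is this final identification: extracting the conjugated form $z^n\,e^{z^{-1}az}\,z^{-n}$ from the bare $\exp a^{*}$ that appears naturally from (\ref{Vacom}) requires careful bookkeeping of the charge-dependent shift together with the $a^{*}\to z^{-1}az$ conversion. Once these are carried out along the lines of \cite{A1}, the resulting operator depends only on $a$ (not on $G$), so the same transformation sends $\mathcal{W}_n^G$ to $\mathcal{W}_n^{e^{W_a}G}$, and the lemma follows.
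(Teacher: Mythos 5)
The paper does not actually supply a proof of this lemma---it defers to the construction in \cite{A1}---so the comparison here is against what that construction requires. Your input (ii) is fine: the vanishing of $\lvacn e^{W_a}-\lvacn$ for $a\in w^-$ is correct, and your mode-counting argument for it is essentially rigorous. The gap is in the combination of input (i) with the final ``identification'' step, which you yourself flag as the main obstacle and then assert away. If you extract the basis vectors from $\psi(z)$ insertions and push $e^{W_a}$ through via the first line of (\ref{Vacom}), the operator you obtain is $e^{a^*}$, and $e^{a^*}$ is \emph{not} conjugate to $e^{z^{-1}az}$ by any power of $z$: for a monomial $a=z^{-k}D^l$ one has $a^*=(-D)^l z^{-k}$ while $z^{-1}az=z^{-k}(D+1)^l$, and the relative sign $(-1)^l$ on the leading symbol cannot be removed by conjugation with a multiplication operator. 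Concretely, for the operator $a=-\tfrac12(\partial_z-z^{-1})$ that actually occurs in this paper (Corollary after Lemma \ref{l_WW}), $a^*=+\tfrac12\partial_z$ whereas $z^{-1}az=-\tfrac12\partial_z$; using $e^{a^*}$ would shift $z$ in the wrong direction and would be inconsistent with Lemma \ref{l_WW} and with the explicit basis vectors (\ref{qzerobv1}). So the advertised ``adjoint-to-conjugation conversion'' does not exist, and the argument as written proves a different (false, in these conventions) formula.

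The repair is to start from the correct fermionic representation of the basis vectors in the paper's \emph{dual} Grassmannian convention (this is exactly what the Remark after (\ref{bvnorm}) is warning about): the vectors are extracted from $\psistar(z)$ insertions, e.g.
\begin{equation}
\Phi_1^{(n)}(z)\;=\;z^{\,n-1}\,\left<n-1\right|\psistar(z)\,G\left|n\right>,
\end{equation}
normalized by $\left<n-1\right|\psistar(z)\left|n\right>=z^{1-n}$ (one checks this reproduces $\Phi_1^{(n)}=e^{\tilde q/z}$ for $G=e^{\tilde q J_{-1}}$, while the $\psi(z)$ version gives the wrong sign $e^{-\tilde q/z}$). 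Then the \emph{second} line of (\ref{Vacom}) is the relevant one: $\psistar(z)\,e^{W_a}=e^{W_a}\bigl(e^{a}\cdot\psistar(z)\bigr)$, and together with your fact (ii) this gives directly
\begin{equation}
\Phi_1^{(n),\,e^{W_a}G}(z)=z^{\,n-1}\,e^{a}\,z^{\,1-n}\cdot\Phi_1^{(n),\,G}(z)=z^{\,n}\,e^{z^{-1}az}\,z^{-n}\cdot\Phi_1^{(n),\,G}(z),
\end{equation}
with the analogous statement for the higher vectors; no adjoint ever appears, and the charge-dependent conjugation comes out of the normalization $z^{1-n}$ rather than from the leading-power bookkeeping you invoke. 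With that substitution your outline becomes the standard proof; without it, the key step fails.
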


\section{Gromov-Witten invariants of $\bf{P}^1$}\label{seceq}

Here we briefly describe the generating function of the equivariant Gromov-Witten invariants and its nonequivariant limit, for more details see \cite{OP,OP2} and references therein. Let $0$ and $\infty$ be the fixed points of the torus action on ${\bf P}^1$.  The torus ${\mathbb C}^*$ also acts canonically  on $\overline{\mathcal M}_{g,n} ({\bf P}^1,d)$ by translating the maps. The equivariant Poincar{\'e} duals of these points, ${ \boldsymbol{0}},{\boldsymbol{\infty}}\in H_{{\cc}^*}^2({\bf P}^1,{\mathbb Q})$ form a canonical basis in the equivariant cohomology of ${\bf P}^1$. We denote
\be\label{GWinv}
\left<\prod_{i=1}^m \tau_{k_i} ({0}) \prod_{i=m+1}^{n} \tau_{k_i} ({\infty}) \right>_{g,d}^{{\cc}^*}:=\int_{[\overline{\mathcal M}_{g,n} ({\bf P}^1,d)]^\vir} \prod_{i=1}^m  \psi_i^{k_i} \ev_i^*(\boldsymbol{0})\prod_{i=m+1}^{n}  \psi_i^{k_i} \ev_i^*(\boldsymbol{\infty}).
\ee
Equivariant Gromov-Witten generating function depends on two infinite sets of variables $x_i$ and $x_i^\star$ and generates all equivariant Gromov-Witten invariants (\ref{GWinv}),
\be\label{eqGW}
Z({\bf x},{\bf x^\star})=\exp\left(\sum_{g=0}^\infty\sum_{d=0}^\infty \hbar^{2g-2} q^d\left<\exp
\left(\sum_{k=0}^\infty x_k \tau_k(\boldsymbol{0})+x_k^\star\tau_k(\boldsymbol{\infty}) \right)\right>_{g,d}^{{\cc}^*}\right).
\ee
It is a formal series, $Z({\bf x},{\bf x^\star}) \in {\mathbb Q}[\epsilon][[{\bf x},{\bf x^\star},q,\hbar^2,\hbar^{-2}]]$, where we denote the equivariant parameter by $\epsilon$.
This generating function is a tau-function of the 2D Toda lattice hierarchy \cite{OP}, see Section \ref{Eqff} below.

The nonequivariant generating function (\ref{neqpf}) can be obtained from (\ref{eqGW}) by a nonequivariant limit \cite{OP}. Namely, we consider the linear change of variables
\be
x_k=\frac{1}{\epsilon}t_k^1, \,\,\,\,\,\,\,\,\,\,\,\,\,\,\,\,\,\,
x_k^\star=t_k^\omega-\frac{1}{\epsilon}t_k^1.
\ee
Then
\be\label{noneqpf}
\tau({\bf t^\omega},{\bf t^1}):=Z\left(\frac{1}{\epsilon}{\bf t^1},{ \bf t^\omega}-\frac{1}{\epsilon}{\bf t^1}\right)\Big|_{\epsilon=0}.
\ee
From the geometric construction it follows that the limit is smooth and there is no singularity at $\epsilon=0$.


\subsection{Free fermion description of equivariant $\bf{P}^1$ theory}\label{Eqff}

Let us remind a reader the description of the generating function of the equivariant Gromov-Witten invariants of $\bf{P}^1$ in terms of free fermions, constructed by Okounkov and Pandharipande \cite{OP}.  Let 
\be
\mathcal{E}_r(z):=\sum_{k\in \z}e^{z\left(k-\frac{r}{2}+\frac{1}{2}\right)} \normord\psi_{k-r}\psi_k^* \normord+\frac{\delta_{r,0}}{\varsigma(z)},
\ee
where
\be
\varsigma(z)=e^{z/2}-e^{-z/2}.
\ee
The commutation relations between the bosonic current components (\ref{Jk}) and $\mathcal{E}_r(z)$ follow from (\ref{anti}):
\be\label{commpe}
\left[J_k,{\mathcal E}_l(z)\right]=\varsigma(kz) {\mathcal E}_{k+l}(z).
\ee

The components of the operator $\mathcal{E}_0$,
\be\label{Paspsi}
{\mathcal P}_k:=k! \left[z^k\right] \mathcal{E}_0(z),
\ee
constitute a commutative subalgebra in $\gl(\infty)$,
\be
\left[{\mathcal P}_k, {\mathcal P}_m\right]=0,\,\,\,\,\ k,m\in \ZZ_+.
\ee
These operators are diagonal
\be
{\mathcal P}_k=\sum_{m\in \z} (m+1/2)^k\normord\psi_{m}\psi_m^* \normord +\frac{B_{k+1}(1/2)}{k+1},
\ee
where $B_k(n)$ are the Bernoulli polynomials, defined by
\be\label{berndef}
\frac{ze^{nz}}{e^z-1}=\sum_{k=0}^\infty B_k(n) \frac{z^k}{k!}.
\ee
The charged vacuum $\rvacn$ is the eigenstate of the operator $\mathcal{E}_0(z)$
\be\label{scv}
\mathcal{E}_0(z) \rvacn =\frac{e^{nz}}{\varsigma(z)} \rvacn,
\ee
hence
\be\label{Pactn}
{\mathcal P}_k \rvacn = \frac{B_{k+1}(n+1/2)}{k+1} \rvacn. 
\ee
The commutation relations between the operators ${\mathcal P}_k$ and $J_k$ follow from (\ref{commpe}), in particular 
\be\label{JPcom}
\left[J_k, {\mathcal P}_1\right]= k J_k.
\ee

Let us also consider
\be
\mathcal{A}(v,w):=\left(\frac{\varsigma(w)}{w}\right)^v\sum_{k\in \z} \frac{\varsigma(w)^k}{(v+1)_k}\mathcal{E}_k(w),
\ee
where
\be
(1+z)_k:=\frac{\Gamma(z+k+1)}{\Gamma(z+1)}
\ee
is the Pochhammer symbol.
Following \cite{OP} we introduce the vacuum expectation value
\be\label{corfn}
Z_n({\bf x},{\bf x^\star}):=\lvacn e^{\sum_{k=0}^\infty x_k {\mathsf A}_k}\, G_0\, e^{\sum_{k=0}^\infty x_k^\star {\mathsf A}_k^\star}\rvacn,
\ee 
where
\be
G_0:=e^{J_1}\left(\frac{q}{\hbar^2}\right)^{{\mathcal P}_1+\frac{1}{24}} e^{J_{-1}}
\ee
is a $GL(\infty)$ group element. Operators ${\mathsf A}_k$ and ${\mathsf A}_k^\star$ from $\gl(\infty)$ can be described in terms of their generating functions
\begin{equation}
\begin{split}
{\mathsf A}(z)&=\frac{1}{\hbar}\mathcal{A}(\epsilon z, \hbar z),\\
{\mathsf A}^\star(z)&=\frac{1}{\hbar}\mathcal{A}(-\epsilon z, \hbar z)^*,
\end{split}
\end{equation}
namely
\be
{\mathsf A}_k=\left[z^{k+1}\right]{\mathsf A}(z),\,\,\,\, {\mathsf A}_k^\star=\left[z^{k+1}\right]{\mathsf A}^\star(z),\,\,\,\,k\in\ZZ.
\ee
The following theorem describes the equivariant Gromov-Witten theory of ${\bf P}^1$ in the free fermion formalism \cite{OP}:
\begin{theorem*}[Okounkov, Pandharipande]
The equivariant generating function (\ref{eqGW}) is given by the vacuum expectation value (\ref{corfn}),
\be
Z({\bf x},{\bf x^\star})=Z_0({\bf x},{\bf x^\star}).
\ee
\end{theorem*}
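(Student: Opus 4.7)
The strategy is to match two expansions in the variables $\{x_k, x_k^\star, q, \hbar, \epsilon\}$: one produced by equivariant virtual localization on $\overline{\mathcal{M}}_{g,n}(\mathbf{P}^1,d)$, the other by Wick-type expansion of the vacuum expectation value (\ref{corfn}) in the fermion Fock space. I would organize the argument into three conceptual steps.

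First, I would apply the Graber--Pandharipande virtual localization formula to (\ref{GWinv}) with respect to the torus fixing $0$ and $\infty$. The fixed loci are indexed by ``localization graphs'' whose vertices are supported over $0$ or $\infty$ and are joined by edges corresponding to fully ramified tube covers $\mathbf{P}^1\to\mathbf{P}^1$ of prescribed degree. Each vertex contributes an integral over a moduli space of stable curves with $\psi$-classes coming from the descendants together with Hodge classes arising from the virtual obstruction bundle; the insertions $\tau_k(\boldsymbol{0})$ localize at vertices over $0$, and $\tau_k(\boldsymbol{\infty})$ at vertices over $\infty$.

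Second, I would invoke the GW/Hurwitz correspondence with completed cycles to rewrite each vertex contribution as a character sum over partitions $\lambda$ of the incident degree, with an insertion of the completed conjugacy class $\overline{(k+1)}$ for each descendant $\tau_k$. Under the boson--fermion correspondence partitions index an orthonormal basis of the charge-zero Fock space, and the completed class operator acts diagonally with eigenvalue matching that of $\mathcal{P}_k$ on $|\lambda\rangle$, up to the reference shift $B_{k+1}(1/2)/(k+1)$ reflecting the energy of the Dirac sea. Summing over the degree with weight $q^d$ collects these eigenvalues into the operator $(q/\hbar^2)^{\mathcal{P}_1+1/24}$, while the simple ramification contributions from the tube degenerations on either side are generated by $e^{J_{\pm 1}}$, exactly as in Okounkov's formula for double Hurwitz numbers.

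Third, I would identify the descendant vertex contributions with the operators $\mathsf{A}_k$ and $\mathsf{A}_k^\star$. The content here is the operator identity that $\mathcal{A}(v,w)$, with $v=\epsilon z$ and $w=\hbar z$, reproduces the one-point Hodge integral vertex: the prefactor $(\varsigma(w)/w)^v$ encodes the $\psi$-class expansion $1/(1-\mu\psi)$ convolved with the equivariant edge weight, the Pochhammer denominator $(v+1)_k$ reflects integration against the Hodge bundle in the equivariant parameter $v$, and the operators $\mathcal{E}_k(w)$ implement the corresponding shift of $\lambda$ by a single hook. The $\star$-version is obtained from the conjugate action at $\infty$, which is precisely why $\mathsf{A}^\star(z)$ is defined through the adjoint $\mathcal{A}(-\epsilon z,\hbar z)^*$. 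The main obstacle is exactly this operator identity: establishing that $\mathcal{A}(v,w)$ reproduces the descendant vertex is a delicate one-point Hodge integral computation --- essentially an equivariant refinement of the ELSV formula --- and is where the real content of the Okounkov--Pandharipande proof lives. By contrast, the localization bookkeeping, the character-theoretic reduction to completed cycles, and the boson--fermion translation are all formal once this central operator identity is in hand.
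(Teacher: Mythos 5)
This theorem is not proved in the paper at all: it is stated with attribution as a result of Okounkov and Pandharipande, imported from their work on the equivariant Gromov--Witten theory of ${\bf P}^1$ (reference [OP]), and the author uses it as a black box. So there is no internal proof to compare your proposal against; the only meaningful comparison is with the original Okounkov--Pandharipande argument.

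Measured against that, your outline captures the correct architecture --- virtual localization over the two torus-fixed points, reduction of the vertex contributions to Hodge integrals, translation into the fermionic Fock space, and identification of the descendant vertices with the operators $\mathsf{A}_k$, $\mathsf{A}_k^\star$ --- and you correctly locate the crux in the operator identity for $\mathcal{A}(v,w)$. But as written it is a roadmap rather than a proof: the one statement you yourself isolate as carrying ``the real content'' (that $\mathcal{A}(v,w)$ reproduces the equivariant one-point vertex) is asserted, not established, and the same is true of the claims that the edge and degeneration contributions assemble into $e^{J_{\pm 1}}$ and that the $q$-grading produces exactly $(q/\hbar^2)^{\mathcal{P}_1+1/24}$; the $1/24$ and the unstable-contribution normalizations are precisely where such bookkeeping fails silently if not done explicitly. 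A secondary caveat: the completed-cycles GW/Hurwitz correspondence you invoke in your second step is the subject of the companion non-equivariant paper [OP2] and is logically a different limb of the trilogy; in the equivariant argument the vertex evaluation runs through the ELSV formula and the Burnside character sum rather than through completed cycles as such, so your step two slightly conflates the two papers. None of this makes the outline wrong as a description of the strategy, but the proposal cannot be accepted as a proof of the theorem --- which is consistent with the paper itself, since it does not attempt one either.
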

In \cite{OP} it was shown that there are conjugation operators ${\mathsf W}(\epsilon)\in GL(\infty)$ and ${\mathsf W}^\star(\epsilon)\in GL(\infty)$ such that
\be\label{epsconj}
W(\epsilon)^{-1}\left( \sum_{k=0}^\infty x_k {\mathsf A}_k\right)W(\epsilon) = J_+({\bf t }),\\
W^\star(\epsilon)\left( \sum_{k=0}^\infty x^\star_k {\mathsf A}_k^\star\right)W^\star(\epsilon)^{-1} =J_-({\bf s }).
\ee
The operators $W(\epsilon)$ and $W^\star(\epsilon)$ belong to the upper and lower triangular subgroups of $\GL(\infty)$ respectively, so $\lvac W(\epsilon)=\lvac$, $W^\star(\epsilon) \rvac = \rvac$. The variables ${\bf t}$ and ${\bf s}$ are related to the variables ${\bf x}$ and ${\bf x^\star}$ by a linear transformation, conjectured by Getzler \cite{Getz1},
\be\label{timechange}
t_n=\hbar^{n-1}\, \res_{z=0} \sum_{k=0}^\infty x_k \frac{z^{n-k-2}}{(1+\epsilon z)\dots(n+\epsilon z)},\\
s_n=\hbar^{n-1}\, \res_{z=0} \sum_{k=0}^\infty x_k^\star \frac{z^{n-k-2}}{(1-\epsilon z)\dots(n-\epsilon z)}.
\ee
Therefore, the correlation function (\ref{corfn}) being a function of the variables $\bf{t}$ and $\bf{s}$ has a canonical 2D Toda fermionic form (\ref{stancor}) with
\be
G=W(\epsilon)^{-1} G_0 W^\star(\epsilon)^{-1}.
\ee
Moreover, in \cite{OP} it is proven that dependence on the variable $n$ is described by
\be\label{ndep}
Z_n({\bf x},{\bf x^\star})=\left(\frac{q}{\hbar^2}\right)^{\frac{n^2}{2}} e^{\frac{\hbar n}{\epsilon} \left(\frac{\p}{\p x_0}-\frac{\p}{\p x_0^\star}\right)}Z({\bf x},{\bf x^\star}),
\ee
which defines the  left hand side of this equation not only for $n \in {\mathbb Z}$, but  for arbitrary value of the continuous parameter $n$.


\subsection{Extended stationary sector of Gromov-Witten theory of ${\bf P}^1$} 

The nonequivariant limit of the vacuum expectation value (\ref{corfn}) is highly non-trivial. However, to describe the extended stationary sector (\ref{tau11}) it is enough to take a naive limit, or just
to put $\epsilon=0$: 
\be\label{Azero}
{\mathsf A}(z)\Big|_{\epsilon=0}=\frac{1}{\hbar}\sum_{k\geq 0} \frac{\varsigma(\hbar z)^k}{k!} {\mathcal E}_k(\hbar z),\\
{\mathsf A}^\star(z)\Big|_{\epsilon=0}=\frac{1}{\hbar} \sum_{k\geq 0} \frac{\varsigma(\hbar z)^k}{k!} {\mathcal E}_{-k}(\hbar z), 
\ee
or, as follows from (\ref{commpe}), 
\begin{equation}\label{Azero1}
\begin{split}
{\mathsf A}(z)\Big|_{\epsilon=0}&=\frac{1}{\hbar}e^{J_1} {\mathcal E}_0(\hbar z)e^{-J_1},\\
{\mathsf A}^\star(z)\Big|_{\epsilon=0}&=\frac{1}{\hbar}e^{-J_{-1}} {\mathcal E}_0(\hbar z)e^{J_{-1}}.
\end{split}
\end{equation}
Let 
\be\label{Aconjj}
W:=W(0), \,\,\,\,\,\,\, W^\star:=W^\star(0).
\ee
Comparing (\ref{Azero1}) with (\ref{epsconj}) one concludes that
\be\label{Wconf}
W^{-1}\,e^{J_1}\,{\mathcal P}_k\,e^{-J_1}\,W =J_k,\\
W^\star\,e^{-J_{-1}}\,{\mathcal P}_k\,e^{J_{-1}}\,W^{\star-1}=J_{-k}.
\ee
Substituting the components of (\ref{Azero1}) into (\ref{corfn}), one gets
\begin{equation}
\begin{split}
Z_n({\bf x},{\bf x}^\star)\Big |_{\epsilon=0}&=\lvacn e^{J_1} e^{\sum_{k=1}^\infty x_{k-1} \frac{\hbar^{k-1} {\mathcal P}_{k}}{k!}}e^{-J_1}G_0 e^{-J_{-1}} e^{\sum_{k=1}^\infty x^\star_{k-1} \frac{\hbar^{k-1} {\mathcal P}_{k}}{k!}} e^{J_{-1}} \rvacn\\
&=\lvacn e^{J_1}e^{\sum_{k=1}^\infty (x_{k-1}+x^\star_{k-1}) \frac{\hbar^{k-1} {\mathcal P}_{k}}{k!}} \left(\frac{q}{\hbar^2}\right)^{{\mathcal P}_1+\frac{1}{24}}    e^{J_{-1}} \rvacn\\
&=\left(\frac{q}{\hbar^2}\right)^\frac{n^2}{2}\lvacn e^{J_1}e^{\sum_{k=1}^\infty (x_{k-1}+x^\star_{k-1}) \frac{\hbar^{k-1} {\mathcal P}_{k}}{k!}}e^{\frac{{q}J_{-1}}{\hbar^2}}\rvacn.
\end{split}
\end{equation}
In the last equality we use (\ref{Pactn}) and (\ref{JPcom}). Note, that $Z_n$ depends only on  the sum of the variables $x_k$ and $x_k^\star$.

Let us introduce
\be
\tau_n({\bf t}):=e^{-\tilde{q}}\,\tilde{q}^{-\frac{n^2}{2}} \,Z_n({\bf x},{\bf 0})\Big|_{\epsilon=0; \, x_k=\frac{(k+1)! t_{k+1}}{\hbar^k}},
\ee
where $t_k$ and $x_k$ are related by the nonequivariant limit of Getzler's change of variables (\ref{timechange}) and $\tilde{q}=q/\hbar^2$. Then
\be\label{taunnn}
\tau_n({\bf t})= e^{-\tilde{q}} \lvacn e^{J_1}   e^{\sum_{k=1}^\infty t_k  {\mathcal P}_{k} }e^{\tilde{q} J_{-1}} \rvacn.
\ee
It is related to the noneqivariant tau-function (\ref{tau11}) by a change of variables
\be\label{COV}
\left.\tau({\bf t^{\omega}},t_0^1)=e^{\tilde{q}} \tau_{t_0^1/\hbar} ({\bf t})\right|_{t_k=\frac{\hbar^{k-1}t^\omega_{k-1}}{k!}}.
\ee
Below we work with the tau-function $\tau_n({\bf t})$, and make the change of variables back to ${\bf t^{\omega}}$ and $t_0^1$ only in the final expression for the matrix integral.
This tau-function can also be expressed by vacuum expectation value (\ref{tau}) with standard bosonic current components,
\be\label{standardtn}
\tau_n({\bf t})= e^{-\tilde{q}}\,  \tilde{q}^{-\frac{n^2}{2}}\, \lvacn e^{J_+(\bf t)} W^{-1}G_0 \rvacn. 
\ee
Equations (\ref{Pactn})  and (\ref{JPcom}) yield
\be\label{G0act}
G_0\rvacn=\tilde{q}^{\frac{n^2}{2}}  \,    e^{\tilde{q}(J_{-1}+1)}\rvacn,
\ee
so that 
\be\label{fermm}
\tau_n({\bf{t}})= \lvacn e^{J_+(\bf t)} W^{-1} e^{\tilde{q}J_{-1}}  \rvacn.
\ee
Let us denote corresponding point of the Sato Grassmannian by ${\mathcal W}_{n}$.
\begin{remark}
Tau-function (\ref{taunnn}) depends on $\hbar$ and $q$ only through the combination $\tilde{q}$.
\end{remark}


\subsection{Tau-function at $q=0$}\label{q0limit}
As a warm-up example, let us find the  tau-function (\ref{taunnn}) at $q=0$.
\begin{remark}
At $q=0$ the equivariant generating function (\ref{eqGW}) remains non-trivial, but factorises \cite{OP}.
The generating function of the degree $0$ equivariant Gromov-Witten invariants of ${\bf P}^1$ coincides with the generating function of linear Hodge integrals. We will discuss this relation in the forthcomming publication.
\end{remark}
We have
\be\label{tauforq0}
\tau_n({\bf t})\Big|_{q=0}=  \lvacn e^{J_1}  e^{\sum_{k=1}^\infty t_k  {\mathcal P}_{k} } \rvacn=e^{\sum_{k=1}^\infty c_k(n) t_k},
\ee
where
\be\label{ck}
c_k(n)=\lvacn  {\mathcal P}_{k} \rvacn =\frac{B_{k+1}(n+1/2)}{k+1}.
\ee
Below we will need the value of $c_k$ for $k=1$
\be\label{ck1}
c_1(n)=\frac{B_2(n+1/2)}{2}=\frac{n^2}{2}-\frac{1}{24}.
\ee

Let us introduce
\be
F_n(z):=\sum_{k=1}^\infty \frac{c_k(n)}{kz^k}=\sum_{k=1}^\infty \frac{B_{k+1}(n+1/2)}{(k+1)kz^k}.
\ee
Then
\be
F_n(z)=\sum_{k=1}^\infty (-1)^{k+1}\frac{B_{k+1}(1/2-n)}{(k+1)kz^k},
\ee
where we use a symmetry of the Bernoulli polynomials, obvious from the definition (\ref{berndef}).
This series appears in Stirling's expansion of the gamma function,
\be\label{Gammaas}
\Gamma(z+1/2-n)=\sqrt{2\pi}z^{z-n}e^{-z}e^{F_n(z)},
\ee
valid for large values of $|z|$ with $|\arg(z)|<\pi$. Here $n$ is an arbitrary finite complex number. 

\begin{remark} 
Let us stress that (\ref{Gammaas}) gives the asymptotic expansion, which acquires the nonperturbative corrections of the form $e^{\pm2\pi i z}$, see, e.g., \cite{Gammacor}.
Below we always neglect the nonperturbative corrections and identify the gamma-function with its asymptotic expansion (\ref{Gammaas}).
\end{remark}
Principal specialization of the tau-function is given by
\be
\tau_n([z^{-1}])\Big|_{q=0}=\frac{\Gamma(z+1/2-n)}{\sqrt{2\pi}z^{z-n}e^{-z}}.
\ee
Hence, the point of the Sato Grassmannian ${\mathcal W}_{n}\Big|_{q=0}$ for the tau-function (\ref{tauforq0}) is given by the basis vectors
\be
\tilde{\Phi}_k^{(n)}(z)\Big|_{q=0}=\frac{\Gamma(z+1/2-n)}{\sqrt{2\pi}z^{z-n}e^{-z}}z^{k-1}.
\ee
It is more convenient to consider another basis in the same space, 
\be\label{qzerobv1}
{\Phi}_k^{(n)}(z)\Big|_{q=0}=\frac{\Gamma(z+k-1/2-n)}{\sqrt{2\pi}z^{z-n}e^{-z}}.
\ee


\subsection{Conjugation operators at $\epsilon=0$}

In this section we describe the conjugation operators $W$ and $W^{\star}$. Let us focus on $W$, while the description of the  adjoint operator  $W^{\star}$ can be obtained similarly.
For this purpose we consider two different descriptions of the operator ${\mathsf A}_0$  at $\epsilon=0$. 
On the one hand, from (\ref{epsconj}) it follows that
\be
{\mathsf A}_0\Big|_{\epsilon=0}  = W J_1 W^{-1}.
\ee
On the other hand, from (\ref{Azero1}) we have
\be
{\mathsf A}_0\Big|_{\epsilon=0}=e^{J_1}{\mathcal P}_1 e^{-J_1}=J_1+ {\mathcal P}_1.
\ee
In terms of fermions (\ref{btof}) this operator is given by
\be\label{A0ferm}
{\mathsf A}_0\Big|_{\epsilon=0}=\res_z \left(z^{-1} \normord \psi(z){\mathsf a}_0 \psi^*(z) \normord\right)-\frac{1}{24},
\ee
where 
\be
{\mathsf a}_0=z+\frac{1}{2}-D.
\ee

Let us construct an operator ${\mathtt W}\in \exp(w^-)$ acting on the Sato Grassmannian, which is a counterpart of the dressing operator $W$ in the sense of Lemma \ref{lemmaonaact}. For this purpose, we find an operator $\tilde{\mathtt W} \in \exp(w^-)$ such that
\be\label{wrv}
\tilde{\mathtt W}\, z\, \tilde{\mathtt W}^{-1}= z^{-1} {\mathsf a}_0 z =  z-\frac{1}{2}- D.
\ee

\begin{remark}
As it was indicated in \cite{OP}, this equation does not completely specify the solution. However, it is easy to show that solution $\tilde{\mathtt W}$ is related to ${\mathtt W}$ by a right multiplication with $e^{{\mathtt f}(z)}$,
\be
{\mathtt W}=\tilde{\mathtt W}\, e^{{\mathtt f}(z)},
\ee
where
\be\label{corseries}
{\mathtt f}(z)=- \sum_{k=1}^\infty \alpha_k z^{-k}
\ee
is an operator of multiplication by a formal series with $\alpha_k \in \mathbb{C}$. In the central extended version this leads to a relation
\be
W=\tilde{W} e^{- \sum_{k=1}^\infty \alpha_k J_{-k}}
\ee
and
\begin{equation}
\begin{split}
\tau_n({\bf t})&=  \lvacn e^{J_+(\bf t)} e^{\sum_{k=1}^\infty \alpha_k J_{-k}} \tilde{W}^{-1} e^{\tilde{q} J_{-1}} \rvacn\\
&= e^{ \sum_{k=1}^\infty k\alpha_k t_k}  \lvacn e^{J_+(\bf t)} \tilde{W}^{-1} e^{\tilde{q} J_{-1}} \rvacn.
\end{split}
\end{equation}
Thus, different choices of $\tilde{W}$ correspond to the multiplication of the tau-function by the factor $\exp(\sum k \alpha_k t_k)$, and, in particular, are responsible for the constant contributions in (\ref{A0ferm}). Since the coefficients of the operator $\mathcal{A}$ do not depend on $q$, we can find $\alpha_k$'s  from the consideration of the tau-function at $q=0$. Series ${\mathtt f}(z)$ for a particular choice of $\tilde{\mathtt W}$ will be determined in the next section. 
\end{remark}

Let us put
\be
\tilde{\mathtt W}=e^{\mathtt y}e^{-\frac{1}{2}\p_z},
\ee
where ${\mathtt y} \in w_-$. Then from (\ref{wrv}) it follows that the operator ${\mathtt y}$ satisfies
\be\label{ydef}
e^{\mathtt y}\, z\, e^{-\mathtt y}=z-z\p_z.
\ee
It is easy to see that the ansatz 
\be
{\mathtt y}=z\, g\left( \p_z\right),
\ee
for some $g(z)\in z^2{\mathbb Q}[[z]]$, provides a solution. The coefficients of the series $g$ can be obtained recursively. 
The adjoint operator $W^{\star}$ can be constructed similarly with 
\be
{\mathtt y}^\star=z^{-1} g\left(z^{2}\p_z\right).
\ee
Hence, we proved
\begin{lemma}\label{l_WW}
\begin{equation}
\begin{split}
{\mathtt W}&=e^{\mathtt y} e^{-\frac{1}{2}\p_z} e^{{\mathtt f}(z)},\\
{\mathtt W}^\star&=e^{{\mathtt f}(z^{-1})} e^{-\frac{z^2}{2}\p_z} e^{{\mathtt y}^\star}.
\end{split}
\end{equation}
\end{lemma}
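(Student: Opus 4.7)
My plan is to consolidate the construction carried out in the paragraphs immediately preceding the lemma statement and then to run the parallel analysis for the adjoint operator. Starting from ${\mathsf A}_0|_{\epsilon=0} = W J_1 W^{-1}$ together with the fermionic form \eqref{A0ferm}, Lemma~\ref{lemmaonaact} applied to $n=0$ translates this into the condition \eqref{wrv} for the corresponding operator $\tilde{\mathtt W}\in \exp(w^-)$ acting on the Sato Grassmannian, modulo the multiplicative ambiguity by a diagonal factor $e^{{\mathtt f}(z)}$ discussed in the preceding remark. Writing $\tilde{\mathtt W} = e^{\mathtt y}\, e^{-\tfrac{1}{2}\partial_z}$ and using the translation identity $e^{-\tfrac{1}{2}\partial_z}\, z\, e^{\tfrac{1}{2}\partial_z} = z - \tfrac{1}{2}$, equation \eqref{wrv} reduces to \eqref{ydef}: $e^{\mathtt y}\, z\, e^{-\mathtt y} = z - D$.

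Next I verify that the ansatz ${\mathtt y} = z\, g(\partial_z)$ with $g(w)\in w^2\,{\mathbb Q}[[w]]$ solves \eqref{ydef} recursively and lies in $w^-$. The identity $[\partial_z^k, z] = k\,\partial_z^{k-1}$ gives $[z\, g(\partial_z),\, z] = z\, g'(\partial_z)$, and more generally
\begin{equation}
[z\, g(\partial_z),\, z\, h(\partial_z)] = z\bigl(g'\, h - h'\, g\bigr)(\partial_z),
\end{equation}
so the ansatz is closed under iterated commutators and the entire Baker--Campbell--Hausdorff expansion of $e^{\mathtt y}\, z\, e^{-\mathtt y}$ has the form $z$ times a formal series in $\partial_z$. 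Equation \eqref{ydef} therefore becomes, after dividing by $z$ and setting $w=\partial_z$,
\begin{equation}
g'(w) + \tfrac{1}{2}\bigl(g'(w)^2 - g''(w)\, g(w)\bigr) + \cdots = -w,
\end{equation}
which is triangular in the order-in-$w$ filtration and determines the coefficients of $g$ uniquely, starting from $g(w) = -w^2/2 + O(w^3)$. Rewriting $\partial_z^k = z^{-k} D(D-1)\cdots(D-k+1)$ then shows ${\mathtt y} = \sum_{k\ge 2} g_k\, z^{1-k} D(D-1)\cdots(D-k+1)\in w^-$, and right multiplication by the diagonal correction $e^{{\mathtt f}(z)}$ from \eqref{corseries} yields the first line ${\mathtt W} = e^{\mathtt y}\, e^{-\tfrac{1}{2}\partial_z}\, e^{{\mathtt f}(z)}$.

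For the second line I run the parallel argument for ${\mathsf A}_0^\star|_{\epsilon=0} = -J_{-1} + {\mathcal P}_1$, using $(z^k D^l)^* = (-D)^l z^k$ from the definition following \eqref{Vacom}. Since $W^\star$ sits in the lower-triangular subgroup of $\GL(\infty)$, its Sato-Grassmannian counterpart is organized around $z^{-1}$ rather than $z$. The substitution $z\leftrightarrow z^{-1}$ sends $\partial_z \mapsto -z^2\partial_z$, converting the shift factor $e^{-\tfrac{1}{2}\partial_z}$ into $e^{-\tfrac{z^2}{2}\partial_z}$ and the generator $z\, g(\partial_z)$ into ${\mathtt y}^\star = z^{-1}\, g(z^2\partial_z)\in w^+$; the same formal series $g$ governs both recursions by symmetry. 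The diagonal correction now appears on the left as $e^{{\mathtt f}(z^{-1})}$, yielding the second line.

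The main obstacle I expect is the bookkeeping in the adjoint case: one must consistently track left- versus right-multiplications and signs under the operator-reversal induced by the adjoint, and verify that the $z\leftrightarrow z^{-1}$ substitution really produces the same $g$ in both conjugation equations. The first line, by contrast, is essentially mechanical once the reduction to \eqref{ydef} and the closure $[z\, g(\partial_z),\, z\, h(\partial_z)] = z(g'h - h'g)(\partial_z)$ are in place.
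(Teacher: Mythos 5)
Your proposal follows the paper's own route: the proof of this lemma in the paper is precisely the construction in the paragraphs preceding it --- reduction of (\ref{wrv}) to (\ref{ydef}) by stripping off the shift $e^{-\frac{1}{2}\p_z}$, the ansatz ${\mathtt y}=z\,g(\p_z)$ with $g$ determined recursively, the right-multiplication ambiguity $e^{{\mathtt f}(z)}$, and the ``similar'' construction of the adjoint with ${\mathtt y}^\star=z^{-1}g(z^2\p_z)$ --- and your write-up reproduces exactly these steps while usefully filling in the closure identity $[z\,g(\p_z),z\,h(\p_z)]=z(g'h-h'g)(\p_z)$ and the verification that ${\mathtt y}\in w^-$. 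One small correction to your setup of the adjoint half (which you, like the paper, leave at the level of ``run the parallel computation''): by (\ref{Azero}) one has ${\mathsf A}_0^\star\big|_{\epsilon=0}={\mathcal P}_1+J_{-1}$ with a plus sign, and relatedly the naive substitution $\p_z\mapsto -z^2\p_z$ produces $g(-z^2\p_z)$, so the sign bookkeeping you flag as the main obstacle is indeed where care is needed.
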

From Lemmas \ref{lemmaonaact} and \ref{l_WW} we have
\begin{corollary}
\begin{equation}
\begin{split}
W&=e^{W_{z g(\p_z - z^{-1})}} e^{W_{-\frac{1}{2}\left(\p_z-z^{-1}\right)}} e^{- \sum_{k=1}^\infty \alpha_k J_{-k}},\\
W^\star&= e^{- \sum_{k=1}^\infty \alpha_k J_{k}} e^{W_{-\frac{z^2}{2}(\p_z-z^{-1})}} e^{W_{z^{-1}g(z^2(\p_z-z^{-1}))}}.
\end{split}
\end{equation}
\end{corollary}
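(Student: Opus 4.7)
The plan is to combine Lemma \ref{l_WW} with Lemma \ref{lemmaonaact} factor by factor. Lemma \ref{l_WW} decomposes the Sato Grassmannian operators $\mathtt W$ and $\mathtt W^\star$ into three explicit factors each, while Lemma \ref{lemmaonaact} lifts any dressing operator of the form $e^{z^{-1}a z}$ with $a\in w^-$ to the $GL(\infty)$ element $e^{W_a}$. For each factor of $\mathtt W$ I will identify an $a\in w^-$ whose dressing action reproduces that factor on the Sato Grassmannian at level $n=0$ (where the conjugation by $z^n$ is trivial), and then assemble the corresponding fermionic exponentials in the same order.

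For $W$ there are three cases to check. The rightmost factor $e^{\mathtt f(z)}=e^{-\sum_k \alpha_k z^{-k}}$ is already in the form $e^{z^{-1}a z}$ with $a=-\sum_k \alpha_k z^{-k}$, since $z^{-1}z^{-k}z=z^{-k}$; using $J_{-k}=W_{z^{-k}}$ this lifts to $e^{-\sum_k \alpha_k J_{-k}}$. For the middle factor $e^{-\frac{1}{2}\partial_z}$ I will take $a=-\frac{1}{2}(\partial_z-z^{-1})\in w^-$, using $(\partial_z - z^{-1})z=z\partial_z$ to get $z^{-1}a z=-\frac{1}{2}\partial_z$. For the leftmost factor $e^{\mathtt y}=e^{zg(\partial_z)}$, the candidate $a=z\,g(\partial_z-z^{-1})\in w^-$ satisfies $z^{-1}a z=g(\partial_z-z^{-1})\,z$, which by the key identity $(\partial_z-z^{-1})^m\,z=z\,\partial_z^m$ equals $z\,g(\partial_z)$.

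The proof for $W^\star$ follows the same pattern, with the factors in the reversed order and the analogous dressing identities applied on the bra side. The middle factor $e^{-\frac{z^2}{2}\partial_z}$ comes from $a=-\frac{z^2}{2}(\partial_z-z^{-1})$, since $z^{-1}a z=-\frac{1}{2}z(\partial_z-z^{-1})z=-\frac{z^2}{2}\partial_z$. The rightmost factor $e^{W_{z^{-1}g(z^2(\partial_z-z^{-1}))}}$ then arises by the $z\leftrightarrow z^{-1}$ analogue of the same polynomial manipulation, applied after the substitution $\partial_z\mapsto z^2\partial_z$ dictated by the form of $\mathtt y^\star$.

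The only real computational ingredient is the operator identity $(\partial_z-z^{-1})^m\,z = z\,\partial_z^m$ for $m\ge 0$, which I will establish by induction on $m$ from the Leibniz rule $\partial_z z = z\partial_z+1$ (inductive step: $(\partial_z-z^{-1})(z\,\partial_z^m)=(z\partial_z+1-1)\partial_z^m=z\,\partial_z^{m+1}$), and then extend to the formal series $g$ coefficient by coefficient. Once this identity is in hand, multiplying the three fermionic factors in the order given by Lemma \ref{l_WW} immediately yields the stated expressions for both $W$ and $W^\star$. I expect this algebraic identity to be the main, though entirely routine, technical step.
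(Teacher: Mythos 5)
Your identification of the generators is correct and is exactly what the paper intends: the conjugation $z^{-1}a\,z$ appearing in Lemma \ref{lemmaonaact} is what turns $\p_z$ into $\p_z-z^{-1}$ in each factor, and your identity $(\p_z-z^{-1})^m\,z=z\,\p_z^m$ (together with its starred analogue $(z^2(\p_z-z^{-1}))^m\,z=z\,(z^2\p_z)^m$) is precisely the computation needed to check that $a=z\,g(\p_z-z^{-1})$, $a=-\tfrac12(\p_z-z^{-1})$ and $a=-\sum_k\alpha_k z^{-k}$ dress the Grassmannian by $e^{\mathtt y}$, $e^{-\frac12\p_z}$ and $e^{{\mathtt f}(z)}$ respectively, matching Lemma \ref{l_WW}. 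So the step you call the main technical ingredient is done correctly and in the same way the paper's argument requires.

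There is, however, a gap at the very end, and it is exactly the one point the paper's own one-line proof singles out. Lemma \ref{lemmaonaact} controls only the \emph{action on the point of the Sato Grassmannian} of the group element $e^{W_a}$; the map $a\mapsto W_a$ of (\ref{btof}) involves normal ordering and is only a projective representation of $w_{1+\infty}$, so products of exponentials and conjugation relations such as (\ref{Wconf}) are reproduced only up to central terms. Matching the dressing action factor by factor therefore shows that your product of fermionic exponentials equals $W$ (resp.\ $W^\star$) only up to a nonzero scalar, and the claim that assembling the three factors ``immediately yields the stated expressions'' overstates what the lemma gives. The missing verification is that the proposed product preserves the normalization $\tau_n({\bf 0})=1$: since each $a$ in the $W$-factors lies in $w^-$, the corresponding bilinears $W_a$ annihilate $\lvacn$ from the left, so $\lvacn W^{-1}e^{\tilde q J_{-1}}\rvacn=\lvacn\rvacn=1$ and no spurious scalar survives. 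A second, smaller omission is that Lemma \ref{lemmaonaact} is stated for $a\in w^-$ acting on the left, while the $W^\star$ factors involve $a\in w^+$ acting on the right; you invoke the ``bra-side analogue'' but should at least state it, since it is not literally the lemma in the paper.
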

To prove the corollary one has only to check that the central extension does not change the normalization of the tau-function, that is $\tau_n({\bf 0})=1$.

\begin{remark}
Coefficients of the series $g$ coincide (up to a prefactor $(-1)^k/k!$) with the coefficients $c_k$, introduced in \cite{SZ} for the description of the linear change of variables connecting the generating functions of the Hurwitz numbers and Hodge integrals,
\be
g(z)=-\frac{1}{2}\,{z}^{2}-\frac{1}{12}\,{z}^{3}-\frac{1}{48}\,{z}^{4}-{\frac {1}{180}}\,{z}^{5}-{
\frac {11}{8640}}\,{z}^{6}-{\frac {1}{6720}}\,{z}^{7}+{\frac {11}{
241920}}\,{z}^{8}\\
+{\frac {29}{1451520}}\,{z}^{9}-{\frac {493}{43545600
}}\,{z}^{10}-{\frac {2711}{239500800}}\,{z}^{11}+O(z^{12}).
\ee
\end{remark}

Let us describe the operator ${\mathtt y}$ in more detail.

\begin{lemma}\label{Commu}
\begin{align}\label{commm1}
e^{-\mathtt y}\p_z e^{\mathtt y}&= 1-e^{-\p_z},\\
e^{-\mathtt y}ze^{\mathtt y}&=z e^{\p_z}.
\label{zconj}
\end{align}
\end{lemma}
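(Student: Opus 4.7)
The plan is to work from the defining relation (\ref{ydef}) alone, which we rewrite as $YzY^{-1}=z(1-\p_z)$ with $Y:=e^{\mathtt y}$. The main step is to determine $Y\p_z Y^{-1}$ first; both claims of the lemma then follow by elementary algebraic manipulations at the level of $Y^{-1}(\cdot)Y$.

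To compute $Y\p_z Y^{-1}$, I note that $\mathrm{ad}_{\mathtt y}(f(\p_z))=[z g(\p_z),f(\p_z)]=[z,f(\p_z)]\,g(\p_z)=-f'(\p_z)g(\p_z)$ is itself a function of $\p_z$ alone. By induction every iterated commutator $\mathrm{ad}_{\mathtt y}^{n}(\p_z)$ is a formal power series in $\p_z$, so $Y\p_z Y^{-1}=h(\p_z)$ for some series $h$. Because $g(z)\in z^{2}\mathbb{Q}[[z]]$, each term in the BCH expansion starts at order $\p_z^{2}$ or higher, so $h(\p_z)=\p_z+O(\p_z^{2})$ and in particular $h(0)=0$. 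Conjugating the canonical relation $[z,\p_z]=-1$ by $Y$ yields $[z(1-\p_z),h(\p_z)]=-1$. Since $h(\p_z)$ commutes with $(1-\p_z)$, this collapses to $-h'(\p_z)(1-\p_z)=-1$, and integrating with the initial condition $h(0)=0$ gives $h(\p_z)=-\log(1-\p_z)$.

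To prove (\ref{commm1}), I conjugate the identity $Y\p_z Y^{-1}=-\log(1-\p_z)$ by $Y^{-1}$: $\p_z=-\log\!\bigl(1-Y^{-1}\p_z Y\bigr)$, whence $Y^{-1}\p_z Y=1-e^{-\p_z}$. For (\ref{zconj}), I conjugate the defining relation (\ref{ydef}) by $Y^{-1}$ to obtain $z=(Y^{-1}zY)(1-Y^{-1}\p_z Y)$, and substituting the just-proved formula gives $z=(Y^{-1}zY)\,e^{-\p_z}$, so $Y^{-1}zY=z\,e^{\p_z}$.

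The main obstacle is the very first step: one must justify that $Y\p_z Y^{-1}$ depends only on $\p_z$ and has vanishing constant term. Both properties hinge on the precise form $\mathtt y=zg(\p_z)$ with $g\in z^{2}\mathbb{Q}[[z]]$; once they are granted, the rest is routine commutator algebra. The only point at which one must be slightly careful is that $(1-\p_z)$ and $h(\p_z)$ commute, but $z$ and $(1-\p_z)$ do not, so the Leibniz expansion of $[z(1-\p_z),h(\p_z)]$ must be arranged as $[z,h(\p_z)](1-\p_z)$ rather than $(1-\p_z)[z,h(\p_z)]$.
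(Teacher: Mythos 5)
Your proof is correct and follows essentially the same route as the paper: identify $e^{\mathtt y}\p_z e^{-\mathtt y}$ as a formal power series in $\p_z$, use the conjugated canonical commutation relation to get $h'(\p_z)(1-\p_z)=1$, solve for $h=-\log(1-\p_z)$, and then deduce both identities by conjugating back. The only difference is that you spell out why the conjugate of $\p_z$ is a function of $\p_z$ alone and why the integration constant vanishes (via $g\in z^2\mathbb{Q}[[z]]$), points the paper leaves implicit.
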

\begin{proof}
Let us denote $e^{\mathtt y}\p_ze^{- \mathtt y}\in {\mathbf C}[[\p_z]]$ by $p(\p_z)$. Since
\be
\left[e^{\mathtt y}\p_ze^{-\mathtt y},e^{\mathtt y} z e^{-\mathtt y}\right]=1,
\ee
we have
\be
\left[p(\p_z),z-z\p_z\right]=p'(\p_z)(1-\p_z)=1.
\ee
Hence
\be
p(z)=\int\frac{dz}{1-z} =-\log(1-z),
\ee
and $e^{\mathtt y}\p_ze^{-\mathtt y}=-\log(1-\p_z)$. Therefore,
\be\label{y2c}
e^{\mathtt y}e^{-\p_z}e^{- \mathtt y}=1-\p_z
\ee
and (\ref{commm1}) immediately follows. Now, from (\ref{ydef}) it follows that
\be
e^{-\mathtt y}z(1-\p_z) e^{\mathtt y} =z
\ee
and combining it with (\ref{y2c}) one obtains (\ref{zconj}).
\end{proof}

\begin{remark}
This Lemma gives an alternative description of the series $g$, namely
\be
e^{g(z)\p_z} \cdot z= 1-e^{-z}.
\ee
\end{remark}


\subsection{Basis vectors and Kac-Schwarz operators}
Let $\Phi_k^{(n)}(z)$ be the basis vectors for the KP tau-function (\ref{fermm}). Then, using Lemma \ref{lemmaonaact}, we get
\be\label{basisvec}
\Phi_k^{(n)}(z)=z^{n}\, {\mathtt W}^{-1}\cdot e^{\frac{\tilde{q}}{z}}
 z^{k-n-1} ,\,\,\,\,\,k=1,2,3,\dots.
\ee
The series (\ref{corseries}) is fixed by the comparison with (\ref{qzerobv1}), namely it should satisfy the equations
\be
z^n e^{-\mathtt f(z)}\, e^{\frac{1}{2}\frac{\p}{\p z}}e^{-\mathtt y}\cdot z^{k-n-1}=\frac{\Gamma(z+k-1/2-n)}{\sqrt{2\pi}z^{z-n}e^{-z}}.
\ee
Using (\ref{zconj}) for any $k\in \mathbb Z$ we get a rational function of $z$,
\begin{equation}\label{yaction}
\begin{split}
e^{-\mathtt y} \cdot z^{k}&= \left(e^{-\mathtt y} z e^{\mathtt y} \right)^{k-1} \cdot z\\
&=\left(z e^{\p_z}\right)^{k-1} \cdot z\\
&=\frac{\Gamma(z+k)}{\Gamma(z)}.
\end{split}
\end{equation}
Thus,
\be\label{fser}
e^{-{\mathtt f}(z)}=\frac{\Gamma(z+1/2)}{\sqrt{2\pi}z^{z}e^{-z}}.
\ee
We see that $\mathtt f$ does not depend on $k$ or $n$, as it should be, and $\alpha_k=\frac{B_{k+1}(1/2)}{(k+1)k}$.
Now we can completely describe an action of the operator ${\mathtt W}^{-1}$ on $\mathbb{C}((z^{-1}))$:
\begin{lemma}\label{lemmainttran}
Operator ${\mathtt W}^{-1}$ defines an integral transform on the space of formal Laurent series $\mathbb{C}((z^{-1}))$ given by
\be
{\mathtt W}^{-1} \cdot  g(z)=\frac{z^{-z}e^z}{\sqrt{2\pi}} \int_{\rr} e^{y(z+1/2)-e^y} g(e^y)dy.
\ee

\end{lemma}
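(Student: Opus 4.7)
The plan is to verify the integral representation on monomials $g(z)=z^k$ and extend by $\mathbb{C}$-linearity (formal continuity) to all of $\mathbb{C}((z^{-1}))$. The three factors in $\mathtt{W}^{-1}$ each have a clean action, and matching them against the integrand amounts to a single substitution reducing the integral to Euler's $\Gamma$.

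First, I would invert the factorization of Lemma \ref{l_WW}, writing
\begin{equation*}
\mathtt{W}^{-1} = e^{-\mathtt{f}(z)}\, e^{\frac{1}{2}\partial_z}\, e^{-\mathtt{y}},
\end{equation*}
and then apply these factors to $z^k$ in order. By (\ref{yaction}),
$e^{-\mathtt{y}}\cdot z^k = \Gamma(z+k)/\Gamma(z)$.
The shift $e^{\frac{1}{2}\partial_z}$ then sends $z\mapsto z+1/2$, giving $\Gamma(z+k+1/2)/\Gamma(z+1/2)$.
Finally, multiplication by $e^{-\mathtt{f}(z)}=\Gamma(z+1/2)/(\sqrt{2\pi}\,z^z e^{-z})$ from (\ref{fser}) produces
\begin{equation*}
\mathtt{W}^{-1}\cdot z^k = \frac{\Gamma(z+k+1/2)}{\sqrt{2\pi}\,z^{z}e^{-z}}.
\end{equation*}

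Next I would compute the integral on the right-hand side of the claim with $g(z)=z^k$. After the substitution $u=e^y$ (so $dy=du/u$), the integral becomes
\begin{equation*}
\int_{\mathbb{R}} e^{y(z+k+1/2)-e^y}\,dy = \int_0^\infty u^{z+k-1/2} e^{-u}\,du = \Gamma(z+k+1/2),
\end{equation*}
and multiplying by the prefactor $z^{-z}e^z/\sqrt{2\pi}$ reproduces $\Gamma(z+k+1/2)/(\sqrt{2\pi}\,z^z e^{-z})$, matching the previous display. Since $\mathtt{W}^{-1}$ is continuous in the $z^{-1}$-adic topology of $\mathbb{C}((z^{-1}))$ and the integrand depends linearly on $g$, the identity on monomials extends to all formal Laurent series, completing the proof.

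The only subtlety, which I would flag as a remark rather than a real obstacle, is that the integral $\int_{\mathbb R} e^{y(z+1/2)-e^y}g(e^y)\,dy$ converges in the classical sense only in a sector of large $\operatorname{Re}(z)$, so the claimed identity is to be read as an equality of asymptotic expansions as $|z|\to\infty$ with $|\arg z|<\pi$ — consistent with the same convention already adopted after (\ref{Gammaas}). Everything else reduces to the standard Mellin–Laplace representation of $\Gamma$, so no genuine difficulty arises.
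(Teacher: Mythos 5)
Your proposal is correct and follows essentially the same route as the paper: both reduce to the action of $\mathtt{W}^{-1}$ on monomials $z^k$ via (\ref{yaction}) and (\ref{fser}), obtaining $\Gamma(z+k+1/2)/(\sqrt{2\pi}\,z^{z}e^{-z})$, and then identify this with the integral through the representation $\Gamma(w)=\int_{\rr}e^{yw-e^{y}}dy$, understood as an asymptotic expansion for large $z$. The only cosmetic difference is that you spell out the three factors of $\mathtt{W}^{-1}$ one at a time, whereas the paper applies (\ref{yaction}) and (\ref{fser}) in a single step.
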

\begin{proof}
From (\ref{yaction}) and (\ref{fser}) for any $k\in \mathbb Z$ we have
\begin{equation}
\begin{split}
{\mathtt W}^{-1}\cdot z^k  &=\frac{\Gamma(z+1/2)}{\sqrt{2\pi}z^{z}e^{-z}}\frac{ \Gamma(z+k+1/2)}{\Gamma(z+1/2)}\\
&=\frac{\Gamma(z+k+1/2)}{\sqrt{2\pi}z^{z}e^{-z}},
\end{split}
\end{equation}
where we consider the asymptotic expansion of the right hand side with the help of (\ref{Gammaas}). 

Now we can apply the standard integral representation of the gamma function, valid for $\Re(z)>0$:
\be
\Gamma(z)=\int_{0}^\infty x^{z-1}e^{-x} dx,
\ee
which in terms of the variable $y=\log x$ can be represented as
\be
\Gamma(z)=\int_{\rr} e^{yz-e^y} dy.
\ee
Therefore,
\be
{\mathtt W}^{-1}\cdot z^k  =\frac{z^{-z}e^z}{\sqrt{2\pi}} \int_{\rr} e^{y(z+k+1/2)-e^y} dy.
\ee
where we take the asymptotic expansion of the integral at large positive $z$. This completes the proof.
\end{proof}

In particular, for the basis vectors (\ref{basisvec}) we have
\begin{corollary}
\be\label{intrep}
\Phi_k^{(n)}(z)= \frac{z^{n-z}e^z}{\sqrt{2\pi}} \int_{\rr} e^{y(z+k-n-1/2)-e^y+\tilde{q}e^{-y}} dy.
\ee
\end{corollary}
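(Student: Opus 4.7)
The corollary is a direct instance of Lemma \ref{lemmainttran} applied to the explicit formula (\ref{basisvec}) for the basis vectors. My plan is simply to substitute $g(z) = e^{\tilde{q}/z}\, z^{k-n-1}$ into the integral transform of the lemma, pull the overall factor $z^{n}$ outside, and observe that all of the pieces combine into a single exponent.

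Concretely, I would proceed as follows. First, recall from (\ref{basisvec}) that
\be
\Phi_k^{(n)}(z) = z^{n}\,{\mathtt W}^{-1}\cdot\bigl(e^{\tilde{q}/z}\,z^{k-n-1}\bigr),
\ee
so the function to which ${\mathtt W}^{-1}$ is applied is $g(z) = e^{\tilde q/z}z^{k-n-1}$. Substituting $z\mapsto e^y$ in $g$ gives $g(e^y) = e^{\tilde q e^{-y}}\,e^{y(k-n-1)}$. Plugging into Lemma \ref{lemmainttran},
\be
{\mathtt W}^{-1}\cdot g(z) = \frac{z^{-z}e^z}{\sqrt{2\pi}}\int_{\rr} e^{y(z+1/2)-e^y}\,e^{\tilde q e^{-y}}\,e^{y(k-n-1)}\,dy,
\ee
and multiplying by $z^n$ out front gives the prefactor $z^{n-z}e^z/\sqrt{2\pi}$. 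Combining the four $y$-dependent exponential factors in the integrand consolidates them into the single exponent $y(z+k-n-1/2) - e^y + \tilde q e^{-y}$, which is exactly (\ref{intrep}).

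The one technical point worth mentioning is the interpretation of the $\tilde q$-dependent piece: the factor $e^{\tilde q e^{-y}}$ grows as $y\to-\infty$, so the integral is not literally convergent when $\tilde q\neq 0$. The correct reading, consistent with the way Lemma \ref{lemmainttran} was stated (asymptotic expansion in $z\to\infty$), is as a formal power series in $\tilde q$: one expands $e^{\tilde q e^{-y}}=\sum_{m\geq 0}\tilde q^m e^{-my}/m!$ and applies ${\mathtt W}^{-1}\cdot z^{k-n-1-m}$ term by term using the identity established inside the proof of the lemma. Since no step uses anything beyond this termwise application, there is no genuine obstacle; the computation is purely a bookkeeping exercise built on top of the lemma.
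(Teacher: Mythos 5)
Your proposal is correct and matches the paper's (essentially unstated) argument exactly: the corollary is obtained by applying Lemma \ref{lemmainttran} to $g(z)=e^{\tilde q/z}z^{k-n-1}$ from (\ref{basisvec}) and combining exponents, and your remark about interpreting the result as a formal series in $\tilde q$ with termwise asymptotic expansion is precisely the caveat the paper states immediately after the corollary. Nothing is missing.
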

In the right hand side we take a series expansion of the integrand in $\tilde{q}$ and then take the integrals term by term using asymptotic expansion (\ref{Gammaas}):
\be
\Phi_k^{(n)}(z)=\frac{z^{n-z}e^{z}}{\sqrt{2\pi}}\sum_{l=0}^\infty\frac{\tilde{q}^l}{l!}\Gamma(z+k-n-l-1/2)  \in z^{k-1} +z^{k-2}\mathbb{C}[[\tilde{q},z^{-1}]].
\ee
We also have an equivalent expression
\be\label{logint}
\Phi_k^{(n)}(z)= \frac{z^{n-z}e^z}{\sqrt{2\pi}} \int_{0}^\infty x^{z+k-n-3/2}e^{-x+\frac{\tilde{q}}{x}} dx.
\ee

Let us find the Kac-Schwarz operators associated with ${\mathcal W}_n$. It is obvious that
\be
\Phi_{k+1}^{(n)}(z)= \frac{z^{n-z}e^z}{\sqrt{2\pi}} e^{\p_z} \int_{\rr} e^{y(z+k-n-1/2)-e^y+\tilde{q}e^{-y}} dy.
\ee
Thus, the operator
\begin{equation}
\begin{split}
b:&= z^{n-z}\,e^z e^{\p_z} z^{z-n}\,e^{-z}\\
&=\left(\frac{z}{z+1}\right)^{n-z} (z+1)e^{\p_z-1}
\end{split}
\end{equation}
stabilizes the point of the Sato Grassmannian,
\be\label{recurs}
b\cdot \Phi_k^{(n)}(z)= \Phi_{k+1}^{(n)}(z).
\ee
Hence, $b$ is the recursion operator.
For the large values of $|z|$ we have
\be\label{bexp}
b e^{-\p_z}=z+\frac{1}{2}-n+\left( \frac{1}{2}\,{n}^{2}-\frac{1}{24}\right){z}^{-1}+O \left( {z}^{-2} \right). 
\ee

Operator $b$ is the Kac-Schwarz operator, obtained by conjugation of $z$
\be
b=z^{n}\,{\mathtt W}^{-1} e^{\frac{\tilde{q}}{z}}z e^{-\frac{\tilde{q}}{z}}{\mathtt W}\, z^{-n}.
\ee
Another Kac-Schwarz operator can be obtained by conjugation of $\p_z+\frac{n}{z}$,
\be
a:=z^{n}\, {\mathtt W}^{-1} e^{\frac{\tilde{q}}{z}}\left(\p_z+\frac{n}{z}\right) e^{-\frac{\tilde{q}}{z}}{\mathtt W}\, z^{-n}.
\ee
Using Lemma \ref{Commu}, we get
\be
a=1+\tilde{q}b^{-2}+(n+1/2-z)b^{-1},
\ee
and
\be\label{aeq}
a \cdot \Phi_k^{(n)}(z)  =(k-1)\, \Phi_{k-1}^{(n)}(z).
\ee
An operator inverse to $b$, 
\be
b^{-1}=\left(\frac{z}{z-1}\right)^{n-z}\frac{1}{z-1}e^{1-\p_z},
\ee
is not a Kac-Schwarz operator.
\begin{proposition}\label{KSlemma}
Operators $a$ and $b$ are the Kac-Schwarz operators for ${\mathcal W}_n$,
\be
a\cdot {\mathcal W}_n \in {\mathcal W}_n\\
b\cdot {\mathcal W}_n \in {\mathcal W}_n.
\ee
They satisfy the commutation relation $\left[a,b\right]=1$ and completely specify a point ${\mathcal W}_n$ of the Sato Grassmannian.
\end{proposition}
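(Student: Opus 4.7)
The plan is to use the explicit actions (\ref{recurs}) and (\ref{aeq}) on the basis vectors of ${\mathcal W}_n$ throughout. The Kac-Schwarz property is immediate: $b$ and $a$ act on the basis $\{\Phi_k^{(n)}\}_{k\geq 1}$ as the shifts $k\mapsto k+1$ and $k\mapsto k-1$ (with the convention $\Phi_0^{(n)}=0$), so both operators preserve the $\mathbb{C}$-linear span ${\mathcal W}_n$. For the commutation relation, the cleanest argument is that $a$ and $b$ arise by conjugating $\p_z+n/z$ and $z$ by the same operator $T=z^n{\mathtt W}^{-1}e^{\tilde q/z}$; since conjugation preserves commutators and $[\p_z+n/z,z]=1$, the identity $[a,b]=1$ follows. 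One can equivalently verify this on the basis by computing $[a,b]\cdot\Phi_k^{(n)}=a\cdot\Phi_{k+1}^{(n)}-(k-1)b\cdot\Phi_{k-1}^{(n)}=k\Phi_k^{(n)}-(k-1)\Phi_k^{(n)}=\Phi_k^{(n)}$.

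The substantive content is that ${\mathcal W}_n$ is the unique point of $\mathrm{Gr}^{(0)}_+$ stabilized by both $a$ and $b$. I would take any candidate ${\mathcal W}'\in\mathrm{Gr}^{(0)}_+$ stable under both and let $\tilde\Phi_k(z)=z^{k-1}(1+O(z^{-1}))$ denote its normalized basis. Inspecting $a=1+\tilde q b^{-2}+(n+1/2-z)b^{-1}$ with the asymptotics (\ref{bexp}) shows that $a$ has principal symbol $\p_z$, so $a\cdot f$ has degree at most $\deg f-1$ for every formal Laurent series $f$; in particular $a\cdot\tilde\Phi_1$ has negative order. But every nonzero element of ${\mathcal W}'\in\mathrm{Gr}^{(0)}_+$ has nonnegative order, forcing $a\cdot\tilde\Phi_1=0$. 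Once $\tilde\Phi_1=\Phi_1^{(n)}$ is pinned down, $b$-stability combined with (\ref{recurs}) gives $\Phi_k^{(n)}=b^{k-1}\tilde\Phi_1\in{\mathcal W}'$ for all $k\geq 1$, and a standard dimension argument comparing the degree-filtered pieces of ${\mathcal W}'$ and ${\mathcal W}_n$ yields ${\mathcal W}'={\mathcal W}_n$.

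The main obstacle I anticipate is verifying that $a\cdot\tilde\Phi_1=0$ has a unique solution with $\tilde\Phi_1=1+O(z^{-1})$. The cleanest route, avoiding any coefficient-by-coefficient recursion analysis, is to transport the equation back through the construction: conjugating by $T^{-1}$ turns $a\cdot\tilde\Phi_1=0$ into the first-order ODE $(\p_z+n/z)\cdot\psi=0$, whose only solution up to scalar is $\psi=Cz^{-n}$. Pushing this unique solution forward through Lemma \ref{lemmainttran} recovers exactly $\Phi_1^{(n)}$ in the integral form (\ref{intrep}), completing the identification ${\mathcal W}'={\mathcal W}_n$.
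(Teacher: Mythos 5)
Your proof is correct and follows essentially the same route as the paper: the stabilization and $[a,b]=1$ come from the explicit actions (\ref{recurs}), (\ref{aeq}) and the conjugation construction, and uniqueness is forced by the degree-lowering property of $a$ (so $a\cdot\Phi_1=0$ for any candidate point) together with $b$-recursion from the first basis vector. The only cosmetic difference is that you justify uniqueness of the normalized solution of $a\cdot\Phi_1=0$ by conjugating back to the first-order equation $(\p_z+n/z)\psi=0$, where the paper invokes the equivalent triangular coefficient recursion in the space of Laurent series in $z^{-1}$.
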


\begin{proof}
It remains to prove that ${\mathcal W}_n$ is specified uniquely by $a$ and $b$. From the expansion (\ref{bexp}) it follows that
\be\label{aaction}
a \cdot z^{k} =k \,z^{k-1} \left(1+O(z^{-1})\right).
\ee
Hence, if $a$ is a KS operator for some tau-function, it should annihilate the first basis vector
\be
a \cdot \Phi_1(z)=0.
\ee
It is easy to see, that there is a normalized unique solution of this equation in the space of the Laurent series in $z^{-1}$. Moreover, this solution is a formal series in $\tilde q$ by construction.

Then, since
\be
b\cdot  z^k= z^{k+1}\left(1+O(z^{-1})\right)
\ee
all higher basis vectors can be obtained by application of $b$ to the first basis vector. Since operator $b$ does not depend on $\tilde{q}$, all higher vectors constructed by recursion are also formal series in $\tilde{q}$. 
\end{proof}

The equation
\be
\left(\frac{n}{z}-\frac{\p}{\p z}-\frac{\p}{\p n}\right)\Phi_k^{(n)} =0
\ee
easily follows from the integral representation (\ref{intrep}). It is the Sato Grassmannian version of the string equation. Namely, on the level of tau-function the operator  $-\frac{\p}{\p z}$ corresponds to a component of the Virasoro algebra $L_{-1}=\sum_{k=1}^\infty k  t_k \frac{\p}{\p t_{k-1}}$, and after the change of variables (\ref{COV}) we get the string equation
\be
\left(\frac{t_{0}^1t_0^{\omega}}{\hbar^2} +\sum_{k=1}^\infty t_k ^{\omega}\frac{\p}{\p t_{k-1}^\omega} -\frac{\p}{\p t_0^1} \right) \tau({\bf t}^\omega,t_0^1)=0.
\ee

\subsection{Quantum spectral curve and matrix model}

Let us modify the basis vectors (\ref{intrep}) by the non-stable contributions \cite{Shadrin}
\be\label{modff}
\Psi_k(z,n):=e^{\frac{1}{\hbar}\left(z\log(z)-z\right)-n\log(z)}\Phi_k^{(n)}\left(\frac{z}{\hbar}\right).
\ee
These modified functions can be represented as
\be\label{waveint}
\Psi_k(z,n):=\frac{\hbar^{1/2-k}}{\sqrt{2\pi}}\int_{\rr} e^{\frac{1}{\hbar} \left(y(z+\hbar(k-n-1/2))-e^y+qe^{-y}\right)}dy.
\ee
They depend on $z$ and $n$ only throught the combination $z-\hbar n$. 

Note that the basis vectors satisfy 
\be
a\,b\cdot \Phi_k^{(n)} = k \, \Phi_k^{(n)},
\ee
where
\be
a\,b= b+\tilde{q} b^{-1} +n+1/2-z
\ee
is a Kac-Schwarz operator. From this equation, or directly from the integral representation (\ref{waveint}), it follows that the modified functions (\ref{modff}) satisfy the equations
\be
\left(e^{\hbar \p_z}+q e^{-\hbar \p_z}-z+\hbar(n+\frac{1}{2} -k)\right)\Psi_k(z,n)=0.
\ee
For $k=1$ it coincides with the equation for the quantum spectral curve obtained for $n=1/2$ in \cite{Shadrin} (in the dual parametrisation) and for generic $n$ in 
 \cite{Norb}. This quantum spectral curve corresponds a to classical curve:
 \be
 e^{y}+qe^{-y}=x.
 \ee


\begin{proof}[{\bf Proof of Theorem \ref{T_1}}]
Let us derive the matrix integral representation of the tau-function (\ref{taunnn}) from the determinant formula (\ref{taudet}). Using the basis vectors (\ref{intrep}) one gets 
\be\label{ehigenv}
\tau_n\left(\left[\Lambda^{-1} \right]\right)=\frac{1}{(2\pi)^\frac{N}{2} \Delta(\lambda) {\mathcal P}} \int_{\rr^N} \prod_{k=1}^N d y_k
\, \Delta(e^y)e^{\Tr \left(\tilde{Y}(\Lambda-n+1/2)-e^{\tilde{Y}}+\tilde{q} e^{\tilde{Y}}\right)},
\ee
where
\be
{\mathcal P}:= e^{-\Tr \left((n-\Lambda)\log\Lambda+\Lambda\right) },
\ee
and $\tilde{Y}:=\diag(y_1,\dots,y_N)$.
One can use the Harish-Chandra-Itzykson-Zuber integral to transform the eigenvalue integral (\ref{ehigenv}) into the matrix integral. Namely, let us normalize the Haar measure on the unitary group $U(N)$ by $\int_{U(N)} \left[d U\right] =1$,
then for two diagonal matrices $A=\diag(a_1,\dots,a_N)$ and $B=\diag(b_1,\dots,b_N)$ we have an identity
\be
\int_{U(N)} \left[d U\right] e^{\Tr U A U^\dagger B}=\left(\prod_{k=1}^{N-1}k! \right)\frac{\det_{i,j=1}^N e^{a_ib_j}}{\Delta(a)\Delta(b)}.
\ee
Then
\be\label{tauT}
\tau_n\left(\left[\Lambda^{-1}\right]\right)= \frac{1}{(2\pi)^\frac{N}{2} {\mathcal P}\prod_{k=1}^Nk!}\int_{\rr^N} \prod_{k=1}^N d y_k \int_{U(N)} \left[d U\right] \Delta(e^y)\Delta(y)e^{\Tr \left(\tilde{Y}U^\dagger(\Lambda-n+1/2)U-e^{\tilde{Y}}+\tilde{q}e^{\tilde{Y}}\right)}.
\ee
Let $Y=U\tilde{Y}U^\dagger$ be a Hermitian matrix. The tau-function can be represented as an integral over Hermitian matrices
\be\label{matrixiin}
\tau_n\left(\left[\Lambda^{-1} \right]\right)=\frac{1}{ {\mathcal P}}\int_{{\mathcal H}_N} \left[d \mu ({Y})\right] e^{\Tr\left(Y\Lambda -e^Y+\tilde{q}e^{-Y}+\left(N/2-n\right)Y\right)}.
\ee
Here
\be\label{measyr1}
\left[d \mu (Y)\right]:=\frac{1}{(2\pi)^\frac{N}{2}\prod_{k=1}^N k!}\Delta(y)\Delta(e^{y})\left[d { U}\right]\prod_{i=1}^N e^{-\frac{N-1}{2}y_i}d y_i
\ee
is a non-flat measure on the space of Hermitian matrices. This measure is invariant with respect to the shift of the Hermitian matrix $Y$ by a scalar matrix
\be
\left[d \mu (Y)\right]=\left[d \mu (Y+cI)\right].
\ee
Let us shift the integration matrix $Y$ by $-I \log{\hbar}$. Then from (\ref{matrixiin}) we get (\ref{Int1}). This completes the proof.
\end{proof}
The measure (\ref{measyr1}) appears at the matrix models for the description of the Hurwitz numbers and other models of 
random partitions \cite{MSh,MMRP}. 

\begin{remark}
The measure $\left[d \mu ({Y})\right]$ is related to the flat measure $\left[dY\right]$ on the space of Hermitian matrices, 
\be
\left[d \mu ({Y})\right]=\exp\left(\sum_{i,j=0,i+j>0}\frac{(-1)^j}{2(i+j)}\frac{B_{i+j}}{i!j!}\Tr { Y}^i\Tr { Y}^j \right)\left[d{ Y}\right]\\
=\sqrt{\det\frac{\sinh\left(\frac{{Y}\otimes {I}-{I}\otimes{Y}}{2}\right)}{\left(\frac{{Y}\otimes {I}-{I}\otimes{Y}}{2}\right)}} \left[d {Y}\right].
\label{munorm}
\ee
Here the flat measure $\left[dY\right]$ is normalized by $\int_{{\mathcal H}_N} \left[dY\right]e^{-\frac{1}{2}\Tr Y^2}=1$.
Thus, we can represent (\ref{tauT}) as a Hermitian matrix integral with double-trace potential
\be
\tau_n\left(\left[\Lambda^{-1}\right]\right)=\frac{1}{ {\mathcal P}}\int_{{\mathcal H}_N} \left[d Y\right] e^{\Tr\left(Y\Lambda -e^Y+\tilde{q}e^{-Y}+\left(N/2-n\right)Y\right)+\sum_{i+j>0}\frac{(-1)^j}{2(i+j)}\frac{B_{i+j}}{i!j!}\Tr { Y}^i\Tr { Y}^j }.
\ee
\end{remark}
Let us stress that the size of the matrix $N$ is independent of the genus expansion parameter $\hbar$.

At least two other matrix models for the Gromov-Witten invariants of ${\bf P}^1$ were discussed in the literature. The Eguchi-Yang model \cite{Eguchi1,Eguchi2r} is not of the generalized Kontsevich type, and its relation to our model is not clear. A model, similar to (\ref{Int1}) was conjectured in \cite{Marino} in the context of topological string theory. In particular, the potentials of two models are close to each other, however, some important details are different. Let us also mention a model, which is a combination of the sum over partitions and matrix integral derived for the equivariant setup by Nekrasov \cite{Nikita}.

 \begin{remark}
After the first version of this paper was posted on the arXiv, we learned from G. Ruzza that a very similar matrix integral was discussed independently in his PhD thesis (see Section 7.3 of \cite{Ruzza}). The formula there was derived using the expressions for the correlation functions of the stationary intersection numbers obtained by Dubrovin, Yang and Zagier \cite{DYZ}.
\end{remark}

We also have an alternative matrix integral expression, based on (\ref{logint})
\be\label{matint2}
\tau({\bf t}^\omega,t_0^1)=\frac{e^{\frac{1}{\hbar}\Tr\left((t_0^1-\Lambda)\log\Lambda+\Lambda\right)+\frac{q}{\hbar^2}}}{\hbar^\frac{N^2}{2}}\int_{{\mathcal H}_N^{>0}} \left[d \mu_* ({X})\right] e^{\frac{1}{\hbar}\Tr\left(\left(\Lambda-\hbar/2-t^1_0\right)\log X -X+ \frac{q}{X}\right)},
\ee
where one integrates over positively defined Hermitian matrices with the measure
\be
 \left[d \mu_* ({X})\right]:=\frac{1}{(2\pi)^\frac{N}{2}\prod_{k=1}^N k!}\Delta(x)\Delta(\log(x))\left[d {U}\right]\prod_{i=1}^Nd x_i.
\ee

\section{Relative Gromov-Witten invariants of ${\bf P}^1$}\label{S4}

In this section, following \cite{OP2} we consider the infinite wedge/free fermions representation of the stationary Gromov-Witten theory of ${\mathbf P}^1$ relative to a point. Corresponding points of the Sato Grassmannian and matrix integrals are simple deformations of the points for the absolute model constructed in the previous section.

The connected stationary Gromov-Witten invariants of ${\mathbf P}^1$ relative to $m$ distinct points are
\be
\left<\prod_{i=1}^n \tau_{k_i}(\omega),\eta^1,\dots,\eta^m\right>^{{\mathbf P}^1}_{g,d}=\int_{\left[\overline{\mathcal M}_{g,n}({{{\mathbf P}^1},\eta^1,\dots,\eta^m})\right]^{vir}} \prod_{i=1}^n \psi_i^{k_i} \ev_i^*(\omega),
\ee 
where $\eta_1,\dots,\eta_m$ are the partitions of $d$, and $\overline{\mathcal M}_{g,n}({{{\mathbf P}^1},\eta^1,\dots,\eta^m})$  is the moduli space of the genus $g$, $n$-pointed relative stable maps with monodromy $\eta_i$ at the given point $q_i \in {\mathbf P}^1$. 

The invariants relative to two points, $0,\infty\in {\mathbf P}^1$ are denoted by
\be
\left<\mu,\prod_{i=1}^n \tau_{k_i}(\omega),\nu\right>^{{\mathbf P}^1}_{g,d},
\ee
where $\mu$ and $\nu$ are two partitions of $d$.
They reduce to the invariants, relative to one point if the partition $\mu$  is trivial, that is 
\be
\mu= (1^d),
\ee
and to the standard stationary theory, considered in the previous section, if both partitions $\mu$ and $\nu$ are trivial, i. e., $\mu=\nu= (1^d)$.

According to Okounkov and Pandharipande \cite{OP2}, the specialization of the generating function of the Gromov-Witten invariants relative to two points to the extended stationary sector
 is given by the sum over partitions
\be\label{angf}
\tau_{{\mathbf P}^1}({\bf x},{\bf t},{\bf s},y_0):=
\exp\left(\sum_{|\mu|=|\nu|}\frac{\mu t_\mu \,\nu s_\nu}{\hbar^{2\ell(\mu)}}
\left<\mu,\exp\left(\hbar^{-1}y_0\tau_0(1)+\sum_{i=0}^\infty \hbar^{i-1}x_i \tau_i(\omega)\right)\nu\right>^{{\mathbf P}^1}\right),
\ee
where
\be
\mu\, t_\mu:=\prod_{i=1}^{{l}(\mu)} \mu_i t_{\mu_i},
\ee
and the summation is over all partitions, including the empty one.
To simplify the presentation, we do not introduce the degree parameter $q$  in this section. It can be easily recovered from the dimensional reasons \cite{OP2}.
The generating function has genus expansion 
\be
\log\left(\tau_{{\mathbf P}^1}({\bf x},{\bf t},{\bf s},y_0)\right)=\sum_{g=0}^\infty \hbar^{2g-2}{\mathcal F}_g({\bf x},{\bf t},{\bf s},y_0).
\ee 
Let 
\be
\tau_n({\bf t},{\bf s};{\bf {\tilde{t}}})=\lvacn e^{J_+({\bf t})} \exp\left(\sum_{k=1}^\infty \tilde{t}_k {\mathcal P}_{k} \right) e^{J_-({\bf s})}\rvacn.
\ee
This is a tau-function of the 2D Toda lattice hierarchy in the variables ${\bf s}$, ${\bf t}$ and $n$  and it describes \cite{OP2} the generating function (\ref{angf}),
\begin{proposition}[Okounkov, Pandharipande]
\be\label{relat2}
\tau_{{\mathbf P}^1}({\bf x},{\bf t},{\bf s}, y_0)=\left.\tau_{y_0/\hbar}({\bf t}/\hbar,{\bf s}/\hbar;{\bf {\tilde{t}}})\right|_{\tilde{t}_k=\frac{\hbar^{k-1}x_{k-1}}{k!}}.
\ee
\end{proposition}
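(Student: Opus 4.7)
The plan is to read the proposition as a direct translation of the Okounkov--Pandharipande free-fermion description of relative stationary GW invariants of $\mathbf{P}^1$ into the present conventions, by combining the GW/Hurwitz correspondence with the boson--fermion correspondence that resums the boundary partitions into coherent current exponentials. I would organize the argument around the three families of operators in $\tau_n(\mathbf{t},\mathbf{s};\tilde{\mathbf{t}})$: the currents $J_+$ and $J_-$ absorbing the two boundary sums, the completed cycles $\mathcal{P}_k$ implementing the descendant insertions, and the charge label $n$ absorbing the $y_0\tau_0(1)$ insertion.

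First, I would invoke the GW/Hurwitz correspondence (completed cycles) from \cite{OP,OP2}: for partitions $\mu,\nu$ of a common integer $d$ and insertions $\tau_{k_1}(\omega),\dots,\tau_{k_n}(\omega)$, the stationary invariant admits the fermionic presentation
\[
\sum_{g}\hbar^{2g-2+\ell(\mu)+\ell(\nu)}\,\Bigl\langle\mu,\prod_{i}\tau_{k_i}(\omega),\nu\Bigr\rangle^{\mathbf{P}^1}_{g,d}
=\frac{1}{z_\mu z_\nu}\,\langle\mu|\prod_{i}\frac{\mathcal{P}_{k_i+1}}{(k_i+1)!}|\nu\rangle,
\]
where $|\nu\rangle:=J_{-\nu_1}\cdots J_{-\nu_{\ell(\nu)}}|0\rangle$ and $\langle\mu|$ is defined analogously from $\langle 0|$. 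Exponentiating the insertions and absorbing the factors $(k_i+1)!$ and the $\hbar$-gradings into the substitution $\tilde{t}_k=\hbar^{k-1}x_{k-1}/k!$ collapses the descendant part of the potential into the single operator $\exp\bigl(\sum_{k\geq 1}\tilde t_k\,\mathcal{P}_k\bigr)$ sandwiched between the boundary states. Second, I would resum over the boundary partitions via the bosonization identity
\[
e^{J_\pm(\mathbf{t})}|n\rangle=\sum_{\mu}\frac{\mu t_\mu}{z_\mu}\,J_{\pm\mu_1}\cdots J_{\pm\mu_{\ell(\mu)}}|n\rangle,
\]
obtained by expanding the exponential in multiplicities of $\mu$ and using $z_\mu=\prod_k k^{m_k}m_k!$. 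Rescaling $\mathbf{t}\to\mathbf{t}/\hbar$ and $\mathbf{s}\to\mathbf{s}/\hbar$ produces precisely the weights $\mu t_\mu/\hbar^{\ell(\mu)}$ and $\nu s_\nu/\hbar^{\ell(\nu)}$, while the $z_\mu z_\nu$ in the first step cancels the $z_\mu$'s from the bosonization. The $\hbar^{-1}y_0\tau_0(1)$ insertion is finally absorbed into the charge label: $\tau_0(1)$ corresponds to $\mathcal{P}_1$, which acts diagonally on $|n\rangle$ via (\ref{Pactn}), so exponentiation shifts the charge by $y_0/\hbar$, reproducing the subscript of $\tau_{y_0/\hbar}$ on the right-hand side (in the spirit of (\ref{ndep})).

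The main technical obstacle is the simultaneous tracking of $\hbar$-powers: the genus weight $\hbar^{2g-2}$, the completed-cycles weight $\hbar^{\ell(\mu)+\ell(\nu)}$, the rescalings $\mathbf{t}\to\mathbf{t}/\hbar$ and $\mathbf{s}\to\mathbf{s}/\hbar$, and the change of variables $\tilde t_k=\hbar^{k-1}x_{k-1}/k!$ must all be reconciled with the exponent $\hbar^{-2\ell(\mu)}$ appearing in (\ref{angf}). The needed combinatorial identity is the relative analogue of the dimension constraint $\sum_i(k_i+1)=2g-2+n+2d-\ell(\mu)-\ell(\nu)$ on non-vanishing invariants, which, together with $|\mu|=|\nu|$, reshuffles the $\hbar$-weights into the prescribed exponent. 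Once this bookkeeping is checked, the identity reduces to the definition of $\tau_n(\mathbf{t},\mathbf{s};\tilde{\mathbf{t}})$, and the proposition follows.
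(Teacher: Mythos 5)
The first thing to say is that the paper itself offers no proof of this proposition: it is stated with the attribution ``Okounkov, Pandharipande'' and justified only by the citation to \cite{OP2}. Your reconstruction therefore has to be measured against the derivation in \cite{OP2}, and in outline it is the right one: the GW/Hurwitz correspondence converts each $\tau_k(\omega)$ insertion into $\mathcal{P}_{k+1}/(k+1)!$ sandwiched between the states $\langle\mu|$ and $|\nu\rangle$, the bosonization identity $e^{J_{\pm}({\bf t})}=\sum_{\mu}\frac{\mu t_\mu}{z_\mu}\prod_i J_{\pm\mu_i}$ resums the boundary partitions into the current exponentials, and the $y_0$-dependence is carried by the charge $n$.

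However, two concrete steps are wrong as written. (i) \emph{The dimension constraint.} Imposing ramification profile $\mu$ over a point is a codimension $|\mu|-\ell(\mu)$ condition, so the relative constraint is $\sum_i(k_i+1)=2g-2+n+\ell(\mu)+\ell(\nu)$ --- the $2d$ cancels --- and not $2g-2+n+2d-\ell(\mu)-\ell(\nu)$ as you state. (Sanity check: for $\mu=\nu=(1^d)$ the correct formula reproduces the absolute constraint $2g-2+n+2d$ quoted in the Introduction, whereas yours gives $2g-2+n$.) Since you yourself identify the $\hbar$-bookkeeping as the crux, and that bookkeeping rests entirely on this constraint, the reconciliation of $\hbar^{2g-2}$, $\hbar^{-2\ell(\mu)}$, ${\bf t}\to{\bf t}/\hbar$, ${\bf s}\to{\bf s}/\hbar$ and $\tilde t_k=\hbar^{k-1}x_{k-1}/k!$ would not close with your version of it; relatedly, your first display equates an $\hbar$-dependent left-hand side with an $\hbar$-independent right-hand side. (ii) \emph{The $y_0$ insertion.} You assert that $\tau_0(1)$ ``corresponds to $\mathcal{P}_1$, which acts diagonally\dots so exponentiation shifts the charge.'' This is doubly off: in the dictionary $\tilde t_k=\hbar^{k-1}x_{k-1}/k!$ it is $\tau_0(\omega)$, the divisor class, that pairs with $\mathcal{P}_1$, and a diagonal operator exponentiates to a scalar multiple of $|n\rangle$ --- it cannot shift the charge. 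The dependence on $\tau_0(1)$ enters through the charge label via the divisor/string equation for the identity class, i.e.\ the translation structure of (\ref{ndep}) (compare (\ref{COV}) in the absolute case). With these two points repaired the strategy does go through, but as written the argument would not.
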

\begin{remark}
This tau-function also describes double weighted Hurwitz numbers and belongs to the class of the hypergeometric tau-functions of 2D Toda lattice hierarchy \cite{OrlovSch}.
\end{remark}

\subsection{Gromov-Witten invariants relative to one point}

In this section we consider the Gromov-Witten invariants of ${\mathbf P}^1$ relative to one point. They correspond to the specialization of (\ref{relat2}). Let
\be\label{tauext}
\tau_n({\bf s};{\bf {\tilde{t}}}):=e^{-s_1} \left. \tau_n({\bf t},{\bf s};{\bf {\tilde{t}}}) \right|_{t_k=\delta_{k,1}}
\ee
so that
\be
\left.\tau_{{\mathbf P}^1}({\bf x},{\bf t},{\bf s}, y_0)\right|_{t_k=\delta_{k,1}}=e^{s_1}\,\left.\tau_{y_0/\hbar}({\bf s};{\bf {\tilde{t}}})\right|_{\tilde{t}_k=\frac{\hbar^{k-1}x_{k-1}}{k!};\, s_k=\hbar^{-k-1} s_k}.
\ee
Here we introduce the factor $e^{-s_1} $ to satisfy the normalization $\tau_n({\bf s};{\bf 0})=1$.

By construction, this is an MKP tau-function in ${\bf s}$ variables. But we also conclude that: 
\begin{proposition}(\cite{Orlov03,AMMN})
$\tau_n({\bf s};{\bf {\tilde{t}}})$, given by (\ref{tauext}) is a tau-function of 2D Toda lattice hierarchy in the variables ${\bf {\tilde{t}}}$, ${\bf s}$ and $n$.
\end{proposition}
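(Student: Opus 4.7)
The plan is to rewrite the vacuum expectation value defining $\tau_n(\mathbf{s};\tilde{\mathbf{t}})$ in the canonical 2D Toda form (\ref{stancor}), namely $\lvacn e^{J_+} G e^{J_-} \rvacn$ with $G\in\GL(\infty)$, since any such expression is by construction a 2D Toda lattice tau-function in $(\tilde{\mathbf{t}},\mathbf{s},n)$. After specializing $t_k=\delta_{k,1}$, the definition reads
\begin{equation*}
\tau_n(\mathbf{s};\tilde{\mathbf{t}}) = e^{-s_1}\,\lvacn e^{J_1}\exp\!\Big(\sum_{k\geq 1}\tilde t_k \mathcal{P}_k\Big) e^{J_-(\mathbf{s})}\rvacn,
\end{equation*}
so the task is to convert the $\mathcal{P}_k$-exponential into a standard $J_+$-exponential and to dispose of the boundary factors $e^{J_1}$ and $e^{-s_1}$.

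The first step is to apply the conjugation identity (\ref{Wconf}), which gives $e^{J_1}\mathcal{P}_k e^{-J_1} = W J_k W^{-1}$, and hence after exponentiation
\begin{equation*}
e^{J_1}\exp\!\Big(\sum_{k\geq 1}\tilde t_k \mathcal{P}_k\Big)e^{-J_1} = W\, e^{J_+(\tilde{\mathbf{t}})}\, W^{-1}.
\end{equation*}
Since $W=W(0)$ belongs to the upper triangular subgroup of $\GL(\infty)$ and preserves the fermion number, $\lvacn W = \lvacn$, which allows us to delete the leftmost $W$.

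The second step is to commute the remaining $e^{J_1}$ through $e^{J_-(\mathbf{s})}$. Using the Heisenberg algebra (\ref{Heis}), $[J_1, J_{-k}]=\delta_{k,1}$ is central, so the Baker--Campbell--Hausdorff formula yields $e^{J_1} e^{J_-(\mathbf{s})} = e^{s_1} e^{J_-(\mathbf{s})} e^{J_1}$. The scalar factor $e^{s_1}$ exactly cancels the prefactor $e^{-s_1}$ built into the definition (\ref{tauext}), which is precisely the role this prefactor was designed to play. Finally, $J_1\rvacn=0$ implies $e^{J_1}\rvacn = \rvacn$, so the rightmost $e^{J_1}$ disappears harmlessly.

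Combining these steps gives
\begin{equation*}
\tau_n(\mathbf{s};\tilde{\mathbf{t}}) = \lvacn e^{J_+(\tilde{\mathbf{t}})}\, W^{-1}\, e^{J_-(\mathbf{s})}\rvacn,
\end{equation*}
which is the canonical 2D Toda vacuum expectation value (\ref{stancor}) with group element $G=W^{-1}\in\GL(\infty)$, establishing the claim. There is no genuine obstacle here: all the work was already done when $W$ was constructed in Section \ref{Eqff}, and the proof is essentially a bookkeeping exercise with conjugations, BCH, and the vanishing of $J_1$ on the charged vacuum.
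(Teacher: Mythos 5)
Your proof is correct and follows essentially the same route as the paper's: both use the conjugation identity (\ref{Wconf}) to turn $e^{J_1}\exp\bigl(\sum_k \tilde t_k\mathcal{P}_k\bigr)$ into $W\,e^{J_+({\bf\tilde t})}\,W^{-1}e^{J_1}$, absorb $W$ into the left vacuum, and cancel the prefactor $e^{-s_1}$ against the scalar produced by commuting $e^{J_1}$ past $e^{J_-({\bf s})}$ before it annihilates on $\rvacn$, arriving at the canonical form (\ref{stancor}). The paper compresses all of this into the three-line display (\ref{Todarep}), but the content is identical; your version merely makes the Baker--Campbell--Hausdorff step and the vacuum identities explicit.
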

\begin{proof}
From (\ref{Wconf}) it follows that (\ref{tauext}) can be represented as
\begin{equation}\label{Todarep}
\begin{split}
\tau_n({\bf s};{\bf {\tilde t}})&= e^{-s_1}\lvacn e^{J_1} \exp\left(\sum_{k=1}^\infty \tilde{t}_k {\mathcal P}_{k} \right) e^{J_-({\bf s})}\rvacn\\
&=e^{-s_1}\lvacn e^{J_+({\bf {\tilde t}})} W^{-1} e^{J_1} e^{J_-({\bf s})}\rvacn\\
&= \lvacn e^{J_+({\bf {\tilde t}})} W^{-1} e^{J_-({\bf s})}\rvacn.
\end{split}
\end{equation}
The last line is of the form (\ref{stancor}), which completes the proof.
\end{proof}
For $s_k=q\delta_{k,1}/\hbar^2$ this tau-function coincides with (\ref{taunnn}):
\be
\tau_n({\bf t})=\left.\tau_n({\bf t},{\bf s})\right|_{s_k=q\delta_{k,1}/\hbar^2}.
\ee
Let us stress that in (\ref{tauext}) there is no direct symmetry between ${\bf s}$ and ${\bf\tilde{t}}$ variables. In a certain sense they are dual to each other, because
\be
\tau_n({\bf s};{\bf {\tilde t}})=\lvacn e^{J_+({\bf s})} W^{\star-1} e^{J_-({\bf {\tilde t}})}\rvacn.
\ee

 \begin{remark}
For $s_{k}=\frac{\delta_{k,r}}{r \hbar^{r+2}}$ the generating function (\ref{tauext}) describes the orbifold Gromov-Witten theory of ${\bf P}^1$, see \cite{Chen} and references therein.
 \end{remark}

\begin{remark}
We expect  that the generating function for the stable elliptic (and, probably, higher genera) invariants can be described by an integral transform in $\bf{s}$ using the methods, developed in \cite{MMRP}.
\end{remark}


\subsection{Basis vectors and Kac-Schwarz operators}
Let
\be\label{chta}
\Phi_k^{(n,{\bf s})}(z)= \frac{z^{n-z}e^z}{\sqrt{2\pi}} \int_{\rr} e^{y(z+k-n-1/2)-e^y+\sum_{m=1}^\infty s_m e^{-my}} dy.
\ee
Then using Lemma \ref{lemmainttran} one immediately concludes that ${\mathcal W}_{n,{\bf s}}=\sppan_{\cc} \{\Phi_1^{(n,{\bf s})},\Phi_2^{(n,{\bf s})},\Phi_3^{(n,{\bf s})},\dots\} \in \rm{Gr}^{(0)}_+$ is a point of the Sato Grassmannian, corresponding to (\ref{tauext}) as a KP tau-function in the variables $\tilde{t}_k$'s.
Similarly to Proposition \ref{KSlemma} we have:
\begin{proposition}
The Kac-Schwarz operators
\begin{equation}
\begin{split}
b:&=\left(\frac{z}{z+1}\right)^{n-z} (z+1)e^{\p_z-1},\\
a:&=1+(n+1/2-z)b^{-1}+\sum_{k=1}^\infty k s_k b^{-k-1}
\end{split}
\end{equation}
satisfy the commutation relation $\left[a,b\right]=1$ and completely specify a point of the Sato Grassmannian, 
\begin{equation}
\begin{split}
a \cdot\Phi_k^{(n,{\bf s})}(z) &=(k-1)\Phi_{k-1}^{(n,{\bf s})}(z) \in {\mathcal W}_{n,{\bf s}},\\
b \cdot \Phi_k^{(n,{\bf s})}(z)&=\Phi_{k+1}^{(n,{\bf s})}(z) \in {\mathcal W}_{n,{\bf s}}.
\end{split}
\end{equation}
\end{proposition}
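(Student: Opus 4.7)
My plan is to follow closely the proof of Proposition \ref{KSlemma}, adapting it to the modified potential in (\ref{chta}). Since the basis vectors $\Phi_k^{(n,{\bf s})}(z)$ differ from (\ref{intrep}) only by the replacement $\tilde q\, e^{-y}\mapsto \sum_{m\geq 1} s_m e^{-my}$, and the recursion operator $b$ itself does not depend on this data, the identity $b\cdot \Phi_k^{(n,{\bf s})} = \Phi_{k+1}^{(n,{\bf s})}$ is obtained by exactly the same manipulation as in (\ref{recurs}): pulling $e^{\p_z}$ out of the prefactor $z^{n-z}e^z$ shifts $z$ by $1$ inside the exponent $e^{y(z+k-n-1/2)}$, raising the effective index $k$ by $1$. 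That ${\mathcal W}_{n,{\bf s}}\in\mathrm{Gr}^{(0)}_+$ with the normalisation (\ref{bvnorm}) follows from Lemma \ref{lemmaonaact} applied to the group element $e^{J_-({\bf s})}$, combined with Lemma \ref{lemmainttran}.

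The key step for the lowering operator is integration by parts in the integration variable $y$ in (\ref{chta}). Differentiating the exponential under the integral produces the factor $(z+k-n-\tfrac12) - e^y - \sum_m m s_m e^{-my}$; the boundary contributions vanish order by order in the formal variables $s_m$, since each coefficient reduces to a convergent gamma-type integral for $\Re z \gg 0$. Using that the insertion $e^{\alpha y}$ shifts the effective index $k\mapsto k+\alpha$, one obtains the recurrence
\begin{equation*}
\Phi_{k+1}^{(n,{\bf s})}(z) \;=\; \bigl(z+k-n-\tfrac12\bigr)\,\Phi_k^{(n,{\bf s})}(z) \;-\; \sum_{m=1}^\infty m\, s_m\, \Phi_{k-m}^{(n,{\bf s})}(z).
\end{equation*}
Shifting $k\mapsto k-1$ and rearranging, the desired identity $a\cdot \Phi_k^{(n,{\bf s})} = (k-1)\Phi_{k-1}^{(n,{\bf s})}$ becomes tautological once one substitutes the explicit form of $a$ and uses $b^{-1}\cdot \Phi_j^{(n,{\bf s})} = \Phi_{j-1}^{(n,{\bf s})}$.

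For the commutator $[a,b]=1$, I note that every summand of $a$ has the form (scalar or $z$-multiplication)\,$\cdot\,b^{-j}$, so $[a,b] = [n+\tfrac12 - z,\, b]\, b^{-1} = -[z,b]\, b^{-1}$. A direct computation using the explicit form $b = \bigl(\tfrac{z}{z+1}\bigr)^{n-z}(z+1)e^{\p_z-1}$ together with $e^{\p_z}z = (z+1)e^{\p_z}$ gives $bz = (z+1)b$, i.e.\ $[z,b] = -b$, and therefore $[a,b]=1$. Uniqueness of ${\mathcal W}_{n,{\bf s}}$ from the pair $(a,b)$ is argued as in Proposition \ref{KSlemma}: the leading behaviour $a\cdot z^k = k z^{k-1}(1+O(z^{-1}))$ forces $a\cdot \Phi_1^{(n,{\bf s})} = 0$, which has a unique normalised Laurent solution (automatically a formal series in ${\bf s}$ by iterating the perturbative expansion), and the remaining basis vectors are then determined by iterating $b$, which is independent of ${\bf s}$.

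The main point requiring care is the legitimacy of the integration by parts and the interpretation of $\sum_m m s_m b^{-m-1}$ as an operator; both are handled by treating all expressions as formal power series in the variables ${\bf s}$, so that at each order in ${\bf s}$ only finitely many terms contribute and the analysis reduces term-by-term to the previously established $\tilde q$-case. No genuinely new technical difficulty arises beyond that of Proposition \ref{KSlemma}.
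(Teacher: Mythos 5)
Your proof is correct and follows essentially the same route as the paper, which states this proposition only ``similarly to Proposition~\ref{KSlemma}'' and leaves the details implicit. The one point where you proceed differently is in obtaining the lowering relation by integration by parts in $y$ under the integral (\ref{chta}) rather than by conjugating $\p_z+\frac{n}{z}$ through the dressing operators as in the absolute case, but this is the same mechanism viewed through the integral transform of Lemma~\ref{lemmainttran}; your recurrence, the computation $bz=(z+1)b$ giving $[a,b]=1$, and the uniqueness argument via $a\cdot\Phi_1^{(n,{\bf s})}=0$ and iteration of $b$ all check out.
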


Let us check that (\ref{tauext}) satisfies the first equation of the 2D Toda lattice hierarchy at ${\bf {\tilde t}}={\bf 0}$.
Since
\be
\hbar^{m}\frac{\p}{\p s_m}\, \Phi_k^{(n,{\bf s})}(z)=b^{-m} \cdot \Phi_k^{(n,{\bf s})}(z),
\ee
we have an identity
\be
(a-1)b\cdot \Phi_k^{(n,{\bf s})}(z)=\left(n+\frac{1}{2}-z+\sum_{k=1}^\infty k s_k \frac{\p}{\p s_k}\right)\Phi_k^{(n,{\bf s})}(z).
\ee
Then from the properties of the Kac-Schwarz operator \cite{A1} it follows that the tau-function (\ref{Todarep}) satisfies
\be\label{string}
\left(\sum_{k=1}^\infty k s_k \frac{\p}{\p s_k} -\frac{\p}{\p {\tilde t}_1}\right) \tau_n({\bf s};{\bf {\tilde{t}}}) =c\, \tau_n({\bf s};{\bf {\tilde{t}}})
\ee
for some $c$ (possibly dependent on ${\bf s}$ and $n$). We can find this constant from the consideration of (\ref{string}) at ${\bf {\tilde t}}={\bf 0}$
\be\label{firstsim0}
\tau_n({\bf s};{\bf 0})=1.
\ee
From (\ref{tauext}) we have
\begin{equation}\label{firstsim1}
\begin{split}
\frac{\p}{\p {\tilde t}_1}\tau_n({\bf s};{\bf {\tilde{t}}})\Big|_{{\bf \tilde{t}}={\bf 0}}&= e^{-s_1} \lvacn e^{J_1}  {\mathcal P}_{1} e^{J_-({\bf s})}\rvacn\\
&=e^{-s_1} \lvacn ( {\mathcal P}_{1}+J_1)e^{J_1}  e^{J_-({\bf s})}\rvacn\\
&=\lvacn ( {\mathcal P}_{1}+J_1) e^{J_-({\bf s})}\rvacn\\
&=(c_1(n)+s_1),
\end{split}
\end{equation}
where $c_1(n)$ is given by (\ref{ck1}), so that 
\be
\left(\sum_{k=1}^\infty k s_k \frac{\p}{\p s_k} -\frac{\p}{\p {\tilde t}_1}+\frac{n^2}{2}+\frac{s_1}{\hbar}\right) \tau_n({\bf s};{\bf {\tilde{t}}}) =\frac{1}{24} \tau_n({\bf s};{\bf {\tilde{t}}}).
\ee
Using (\ref{firstsim0}) and (\ref{firstsim1}) it is easy to show that the first equation of the 2D Toda lattice hierarchy is true at ${\bf \tilde{t}}={0}$, that is
\be
\left.\left(\tau_n({\bf s};{\bf {\tilde{t}}})\frac{\p^2}{\p t_1 \p s_1}\tau_n({\bf s};{\bf {\tilde{t}}}) -\frac{\p}{\p t_1} \tau_n({\bf s};{\bf {\tilde{t}}})\frac{\p}{\p s_1}\tau_n({\bf s};{\bf {\tilde{t}}})\right)\right|_{{\bf {\tilde t}}={0}}=\tau_{n-1}({\bf s};{\bf 0})\tau_{n+1}({\bf s};{\bf 0}).
\ee

\subsection{Quantum spectral curve and matrix model}

Let us modify the basis vectors (\ref{chta}) by the non-stable contributions 
\be
\Psi_k(z,n):=e^{\frac{1}{\hbar}\left(z\log(z)-z\right)-n\log(z)}\, \Phi_k^{(n,{\bf s})}\left(\frac{z}{\hbar}\right).
\ee
The modified functions have the integral representation
\be
\Psi_k(z,n):=\frac{\hbar^{1/2-k}}{\sqrt{2\pi}}\int_{\rr} e^{\frac{1}{\hbar} \left(y(z+\hbar(k-n-1/2))-e^y+\sum_{k=1}^\infty s_k \hbar^{k+1} e^{-ky}\right)}dy
\ee
and satisfy the quantum spectral curve equations
\be\label{qscext}
\left(e^{\hbar \p_z}+\sum_{k=1}^\infty s_k \hbar^{k+1} e^{-k\hbar \p_z}-z+\hbar(n+\frac{1}{2} -k)\right)\Psi_k(z,n)=0.
\ee
The classical curve, corresponding to the stationary sector of Gromov-Witten theory relative to one point, is
 \be
 e^{y}+\sum_{k=1}^\infty s_k e^{-ky}=x.
 \ee

Similar to Theorem \ref{T_1} we have
\begin{theorem}\label{T2}
The stationary generating function of Gromov-Witten invariant of ${\bf P}^1$ relative to one point is given by the asymptotic expansion of the  matrix integral
\be
\left.\tau_{{\mathbf P}^1}({\bf x},{\bf t},{\bf s}, y_0)\right|_{x_k=\delta_{k,1}}=\frac{e^{\frac{1}{\hbar}\Tr\left((t_0^1-\Lambda)\log\Lambda+\Lambda\right)+\frac{s_1}{\hbar^2}}}{\hbar^\frac{N^2}{2}}\int_{{\mathcal H}_N} \left[d \mu ({Y})\right] e^{\frac{1}{\hbar}\Tr\left(Y\Lambda -e^Y+ \sum_{k=1}^\infty s_k e^{-kY}+\left(N\hbar/2-t^1_0\right)Y\right)},
\ee
where
\be
t_k= \hbar\, k!\, Tr \Lambda^{-k-1}.
\ee
\end{theorem}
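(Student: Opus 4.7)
The plan is to mirror the derivation of Theorem \ref{T_1}, replacing the basis vectors (\ref{intrep}) of the Sato Grassmannian point by their $\mathbf{s}$-deformed counterparts (\ref{chta}). The key observation is that $\tau_n(\mathbf{s};\tilde{\mathbf{t}})$, as defined in (\ref{tauext}), is a KP tau-function in the variables $\tilde{\mathbf{t}}$, and its Sato Grassmannian point ${\mathcal W}_{n,\mathbf{s}}$ is given precisely by the integral representation (\ref{chta}). So the strategy is identical to the absolute case, with $\tilde{q} e^{-y}$ in the exponent of the integrand replaced by $\sum_{m\geq 1} s_m e^{-my}$.

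First, I would write $\tau_n(\mathbf{s};\tilde{\mathbf{t}})$ in the Miwa parametrization $\tilde{t}_k=\frac{1}{k}\Tr\Lambda^{-k}$ via the determinant formula (\ref{taudet}) using the basis vectors $\Phi_k^{(n,\mathbf{s})}$. Substituting (\ref{chta}) into the determinant yields an $N$-dimensional eigenvalue integral
\be
\tau_n\left([\Lambda^{-1}]\right)=\frac{1}{(2\pi)^{N/2}\Delta(\lambda)\,{\mathcal P}}\int_{\rr^N}\prod_k dy_k\,\Delta(e^y)\,e^{\Tr\bigl(\tilde{Y}(\Lambda-n+1/2)-e^{\tilde{Y}}+\sum_m s_m e^{-m\tilde{Y}}\bigr)},
\ee
with ${\mathcal P}=\exp(-\Tr((n-\Lambda)\log\Lambda+\Lambda))$ and $\tilde{Y}=\diag(y_1,\dots,y_N)$, in complete analogy with (\ref{ehigenv}).

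Next, I would apply the Harish-Chandra-Itzykson-Zuber formula in the same way as in the proof of Theorem \ref{T_1}. The factor $\Delta(y)\Delta(e^y)$ combines with the HCIZ kernel to reconstitute the non-flat measure $[d\mu(Y)]$ of (\ref{measyr1}) over Hermitian matrices, producing
\be
\tau_n\left([\Lambda^{-1}]\right)=\frac{1}{{\mathcal P}}\int_{{\mathcal H}_N}[d\mu(Y)]\,e^{\Tr\bigl(Y\Lambda-e^Y+\sum_m s_m e^{-mY}+(N/2-n)Y\bigr)}.
\ee
This step works verbatim because the only change from the absolute case is in the scalar potential $e^{-Y}\mapsto e^{-Y}+\sum_m s_m e^{-mY}$, which the HCIZ manipulation treats as inert.

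Finally, I would translate back to the original variables of the theorem. A shift $Y\mapsto Y-I\log\hbar$ of the integration variable (invariance of $[d\mu(Y)]$ under scalar shifts is used here, as in the absolute case) together with the change of times $t_k=\hbar\,k!\,\Tr\Lambda^{-k-1}$ converts the $\tilde{t}_k$ Miwa parametrization into the $x_k$ times of Section \ref{S4}. Setting $t_k=\delta_{k,1}$, $n=y_0/\hbar$, and rescaling $s_k\mapsto\hbar^{-k-1}s_k$ according to the identification made right after (\ref{tauext}) yields the claimed matrix integral. The factor $e^{s_1/\hbar^2}$ in the prefactor accounts for the normalization $e^{-s_1}$ introduced in (\ref{tauext}) after undoing the scaling.

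The only slightly delicate point—and hence the one I would be most careful about—is bookkeeping of the $\hbar$ and $y_0$ substitutions together with the normalization prefactor $e^{-s_1}$: one must check that rescaling $s_k\to\hbar^{-k-1}s_k$ in the integrand reproduces $\sum_k \frac{s_k}{\hbar} e^{-kY}$ inside $\exp(\tfrac{1}{\hbar}\Tr(\cdots))$, and that the $e^{s_1/\hbar^2}$ prefactor cancels the corresponding boundary term. Everything else is structurally identical to the derivation of Theorem \ref{T_1} and requires no new ideas.
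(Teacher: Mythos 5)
Your proposal is correct and follows exactly the route the paper intends: the paper itself offers no separate argument for Theorem \ref{T2}, stating only that it is ``similar to Theorem \ref{T_1}'', and your sketch is precisely that derivation with the basis vectors (\ref{intrep}) replaced by their ${\bf s}$-deformed counterparts (\ref{chta}) and the potential $\tilde{q}e^{-y}$ replaced by $\sum_m s_m e^{-my}$. Your bookkeeping of the rescaling $s_k\mapsto\hbar^{-k-1}s_k$, the shift $Y\mapsto Y-I\log\hbar$, and the normalization prefactor $e^{s_1/\hbar^2}$ all check out, so nothing further is needed.
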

The generalization of the alternative integral formula (\ref{matint2}) is also straightforward. 

 \begin{remark}
For $s_{k}=\frac{\delta_{k,r}}{r \hbar^{r+2}}$ the quantum spectral curve (\ref{qscext}) reduces the curve obtained in a dual paramitrization by Chen and Guo \cite{Chen}.
In this case matrix model describes the stationary sector of the orbifold Gromov-Witten theory of ${\bf P}[r]$.
 \end{remark}

\begin{remark}
If we consider (\ref{tauext}) as a generating function of the weighted Hurwitz numbers \cite{ACEH,Harnad}, then it will correspond to the weight function 
\be
G(x)=\prod_{j=1}^N\left(\frac{\lambda_j-x+\hbar/2}{\lambda_j-x-\hbar/2}\right)^{\hbar^{-1}}.
\ee
Here $\lambda_j$,  the eigenvalues of the external matrix $\Lambda$, are the parameters.
It explicitly depends on $\hbar$ ($=\beta$ in the notations of \cite{ACEH}), hence, we do not expect that the approach of \cite{ACEH} to the topological recursion for weighted Hurwitz numbers will work in this case.
\end{remark}

\section*{Acknowledgments}
We thank A. Mironov and G. Ruzza
for useful discussions and correspondence and S. Shadrin for careful reading of the manuscript.
The author thanks an anonymous referee for  suggestions.
This work was supported by IBS-R003-D1 and by RFBR grant 18-01-00926.

\end{document}